\selectfont\symbol{60}\fontencoding{\encodingdefault}}
\newcommand{\nin}{\not\in}
\newcommand{\tmtextit}[1]{{\itshape{#1}}}
\newtheorem{theorem}{Theorem}[section]
\newtheorem{assumption}[theorem]{Assumption}
\newtheorem{lemma}[theorem]{Lemma}
\newtheorem{remark}[theorem]{Remark}
\newtheorem{example}[theorem]{Example}
\newtheorem{proposition}[theorem]{Proposition}
\newcommand{\oprocendsymbol}{\hbox{$\bullet$}}
\newcommand{\oprocend}{\relax\ifmmode\else\unskip\hfill\fi\oprocendsymbol}
\newcommand{\real}{{\mathbb{R}}}
\newcommand{\naturals}{{\mathbb{N}}}
\newcommand{\GG}{\mathcal{G}}
\newcommand{\II}[1]{\mathcal{I}^{#1}}
\newcommand{\RR}{\mathbf{R}}
\newcommand{\zeros}{\bold{0}}
\newcommand{\ones}{\bold{1}}
\newcommand{\longthmtitle}[1]{\tmtextit{(#1).}}
\newcommand{\myclearpage}{\clearpage}
\renewcommand{\myclearpage}{}
\DeclareMathOperator{\sgn}{sgn}
\DeclareMathOperator{\sat}{sat}
\begin{document}
        
\title{\Large \bf Double-layered distributed transient frequency
  control with regional coordination for power networks}

\author{Yifu Zhang and Jorge Cort{\'e}s\thanks{The authors are with
    the Department of Mechanical and Aerospace Engineering, University
    of California, San Diego, CA 92093, USA, {\tt\small
      \{yifuzhang,cortes\}@ucsd.edu}}}
\maketitle

\begin{abstract}
  This paper proposes a control strategy for power systems with a
  two-layer structure that achieves global stabilization and, at the
  same time, delimits the transient frequencies of targeted buses to a
  desired safe interval.  The first layer is a model predictive
  control that, in a receding horizon fashion, optimally allocates the
  power resources while softly respecting transient frequency
  constraints.  As the first layer control requires solving an
  optimization problem online, it only periodically samples the system
  state and updates its action. The second layer control, however, is
  implemented in real time, assisting the first layer to achieve
  frequency invariance and attractivity requirements. We show that the
  controllers designed at both layers are Lipschitz in the
  state. Furthermore, through network partition, they can be
  implemented in a distributed fashion, only requiring system
  information from neighboring partitions.  Simulations on the IEEE
  39-bus network illustrate our results.
\end{abstract}


\section{Introduction}\label{section:intro}
Power network frequency is used as a key performance metric in
designing load shedding scheme~\cite{NWM-KC-MS:11}.  In simulations,
such a frequency refers to the system frequency that reflects the
weighted average frequencies of all synchronous generators; however,
in practice, due to the lack of availability of measurements for all
generators, only a few of them are selected and sampled for monitoring
and control design~\cite{FM-FD-GH-DJH-GV:2018}.  Furthermore, from the
point of view of contingency recovery, even if the power supply and
demand are re-balanced after a failure, due to the interconnected
dynamics and inertia of power networks, individual buses may still be
isolated from the network due to overheating relay
protection. Therefore, there is a need of designing control schemes to
restrict single bus transient frequency to evolve within an allowable
range under disturbances and contingencies. This is the problem we
address in this paper, paying special attention to the distributed
implementation of the controller as well as the reduction of the
control effort through cooperation.



\textit{Literature review:} Literature~\cite{HDC:11,FD-MC-FB:13}
proposes several sufficient conditions on power network
synchronization; however, as they do not consider bus transient
frequency limit as a constraint, the ideal synchronization condition
may not hold due to possible violation in frequency transients.
Work~\cite{DL-LA-TLV-KT:18} studies the relation between power
injection disturbance and frequency overshoot of individual bus
without active control to regulate frequency transients. On the other
hand, to actively control power network transients, several strategies
have been investigated, including inertial
placement~\cite{TSB-TL-DJH:15}, power system stabilizer~\cite{PK:94},
and power supply re-allocation~\cite{AA-EBM:06}. Yet, these
strategies, aiming at improving system transient behaviors, cannot
rigorously constrain the evolution of frequency to stay within a safe
region. In this regard, we propose two different control
frameworks~\cite{YZ-JC:18-cdc1,YZ-JC:18-cdc2} to achieve both
synchronization and frequency safety. Specifically, as apposed to that in~\cite{YZ-JC:18-cdc1},
 control strategy in~\cite{YZ-JC:18-cdc2} enables cooperation among neighboring buses and reduces the overall control effort in a receding horizon fashion through solving an optimization problem to seek for the optimal control trajectory. However, this
ideal control framework faces practical challenges from two
aspects in real-time implementation. First, it generally takes a long period of time to find the
optimal control trajectory. Second, the control framework requires
finding such an optimal control trajectory at every time instant. 
These deficiencies motivate us to design another framework that can  be implemented in real-time while maintaining the advantage of cooperation.

\textit{Statement of contribution:} 
%
This paper proposes a control strategy that achieves the following
requirements through a dynamical state-feedback control design:
(i)~The closed-loop system is asymptotically stable. (ii)~For every
targeted bus, under perturbation from power injections or network
dynamical interactions, its whole frequency trajectory stays within a
given safe region, provided its initial frequency lies in the same
region. (iii)~If this is not the case, then the frequency trajectory
should enter the safe region within a finite time and never leaves it
afterwards. (iv)~The control strategy is distributed by only requiring
local state and network information.  Hereby, we propose a
double-layered control structure, where the second layer control
strategy is similar to that in~\cite{YZ-JC:18-cdc2}; however, by
relaxing the frequency constraints and restricting the possible
control trajectory from arbitrary to constant signal, the second-layer
controller only needs to periodically (as opposed to continuously)
solve an optimal control trajectory, and the time consumption for
seeking the optimal one is greatly reduced and almost negligible. The
first layer controller, coming from~\cite{YZ-JC:18-cdc1}, only
slightly tunes the output of the second layer control signal so that
the overall signal rigorously ensures requirement (i)-(iv). We also
show that the proposed control is Lipschitz in state and continuous in
time. We verify our results on the IEEE 39-bus power network.

\section{Preliminaries}\label{section:prelimiaries}
We introduce here notation and notions from graph theory.
               
\emph{Notation:} Let $\naturals$, $\real$, $\real_{>}$, and
$\real_{\geqslant}$ denote the set of natural, real, positive real,
and nonnegative real numbers, respectively.  Variables are assumed to
belong to the Euclidean space if not specified otherwise. Denote
$\ones_n$ and $\zeros_n$ in $\real^n$ as the vector of all ones and
zeros, resp. For $a\in\real$, $\lceil a \rceil$ denote its ceiling. We
let $\|\cdot\|$ denote the 2-norm on $\real^{n}$. For a vector
$b\in\real^{n}$, $b_{i}$ denotes its $i$th entry. For
$A\in\mathbb{R}^{m\times n}$, let $[A]_i$ and $[A]_{i,j}$ denote its
$i$th row and $(i,j)$th element, resp. For any $c,d\in\naturals$, let
$[c,d]_{\naturals}= \left\{ x\in\naturals \big| c\leqslant x\leqslant
  d \right\}$.  
%
%
Denote the sign function $\sgn:\real\rightarrow\{0,1\}$ as $ \sgn(a)=
1$ if $a\geqslant 0$, and as $ \sgn(a)= -1$ if $a< 0$.  Finally,
denote the saturation function $\sat:\real \rightarrow \real$ with
limits $a^{\min}<a^{\max}$ by $ \sat(a;a^{\max},a^{\min})= a^{\max}$
if $a\geqslant a^{\max}$, $ \sat(a;a^{\max},a^{\min})= a^{\min}$ if
$a\leqslant a^{\min}$, and $ \sat(a;a^{\max},a^{\min})= a$ otherwise.

\emph{Algebraic graph theory:} We employ basic notions in algebraic
graph theory, cf.~\cite{FB-JC-SM:08cor,NB:94}. An undirected graph is
a pair $\mathcal{G} = \mathcal(\mathcal{I},\mathcal{E})$, where
$\mathcal{I} = \{1,\dots,n\}$ is the vertex set and
$\mathcal{E}=\{e_{1},\dots, e_{m}\} \subseteq \mathcal{I} \times
\mathcal{I}$ is the edge set.  An induced subgraph
$\mathcal{G}_{\beta} = (\mathcal{I}_{\beta},\mathcal{E}_{\beta})$ of
$\mathcal{G} = \mathcal(\mathcal{I},\mathcal{E})$ satisfies
$\mathcal{I}_{\beta}\subseteq\mathcal{I}$,
$\mathcal{E}_{\beta}\subseteq\mathcal{E}$, and
$(i,j)\in\mathcal{E}_{\beta}$ if $(i,j)\in\mathcal{E}$ with
$i,j\in\mathcal{I}_{\beta}$.  Additionally, $\mathcal{E}_{\beta}'
\subseteq\mathcal{I}_{\beta}\times(\mathcal{I}
\backslash\mathcal{I}_{\beta})$ denotes the collection of edges
connecting $\mathcal{G}_{\beta}$ and the rest of the network.
A path is an ordered sequence of vertices such that any pair of
consecutive vertices in the sequence is an edge of the graph. A graph
is connected if there exists a path between any two vertices. Two
nodes are neighbors if there exists an edge linking them. Denote
$\mathcal{N}(i)$ as the set of neighbors of node~$i$.  For each edge
$e_{k} \in \mathcal{E}$ with vertices $i,j$, an orientation consists
of choosing either $i$ or $j$ to be the positive end of $e_{k}$ and
the other vertex to be the negative end. The incidence matrix $D=(d_{k
  i}) \in \mathbb{R}^{m \times n}$ associated with $\mathcal{G}$ is
defined as $ d_{k i} = 1$ if $i$ is the positive end of $e_{k}$, $d_{k
  i} = -1$ if $i$ is the negative end of $e_{k}$, and $d_{k i} = 0$
otherwise.

\section{Problem statement}\label{section:ps}
In this section we introduce the dynamics of the power network and the
control requirements.

\subsection{Power network model}\label{subsection:model}
The power network is modeled by a connected undirected graph
$\mathcal{G}=(\mathcal{I},\mathcal{E})$, where
$\mathcal{I}=\{1,2,\cdots,n\}$ stands for the collection of buses
(nodes) and
$\mathcal{E}=\{e_{1},e_{2},\cdots,e_{m}\}\subseteq\mathcal{I}\times\mathcal{I}$
represents the collection of transmission lines (edges).  For every
bus $i\in\mathcal{I}$, let $\omega_{i}\in\real$, $p_{i}\in\real$,
$M_{i}\in\real_{\geqslant }$, and $E_{i}\in\real_{\geqslant }$ denote
the nodal information of shifted voltage frequency relative to the
nominal frequency, active power injection, inertial, and damping
coefficient, respectively. For simplicity, we assume that the latter
two are strictly positive.  Given an arbitrary orientation on
$\mathcal{G}$, for any edge with positive end $i$ and negative end
$j$, let $f_{ij}$ be its signed power flow and $b_{ij}\in\real_{>}$
the line susceptance. Let $\II{u}\subset\mathcal{I}$ be the collection
of buses with exogenous control inputs. To stack this notation in a
more compact way, let $f\in\real^{m}$, $\omega\in\real^{n}$ and
$p\in\real^{n}$ denote the collection of $f_{ij}$'s, $\omega_{i}$'s,
and $p_{i}$'s, resp. Let $Y_{b}\in\real^{m\times m}$ be the diagonal
matrix whose $k$th diagonal entry is the susceptance of the
transmission line $e_{k}$ connecting $i$ and $j$, i.e.,
$[Y_{b}]_{k,k}=b_{ij}$.  Let $M \triangleq
\text{diag}(M_{1},M_{2},\cdots,M_{n})\in\real^{n\times n}$, $E
\triangleq \text{diag}(E_{1},E_{2},\cdots,E_{n})\in\real^{n\times n}$,
and $D\in\real^{m\times n}$ be the incidence matrix.  The linearized
 network dynamics is~\cite{ARB-DJH:81,AP:12},
\begin{subequations}\label{sube:}\label{eqn:compact-form}
  \begin{align}
    \dot f(t)&=Y_{b}D\omega(t),
    \\
    M\dot\omega(t)&=-E\omega(t)-D^{T}f(t)+p(t)+\alpha(t),\label{eqn:compact-form-2}
  \end{align}
\end{subequations}
where $ \alpha(t)\in\mathbb{A}\triangleq\left\{ y\in\real^{n} \big| \
  y_{w}=0 \text{ for }w\in \mathcal{I} \setminus \II{u}\right\}$.  For
convenience, we use $ x\triangleq (f,\omega)\in\real^{m+n}$.
We adopt the following assumption on the power injections.

\begin{assumption}\longthmtitle{Finite-time convergence of active
    power injection}\label{assumption:finite-convergence}
  For each $i\in\mathcal{I}$, $p_{i}$ is piece-wise continuous and
  becomes constant (denoted by $p_{i}^{*}$) after a finite time, i.e.,
  there exists $0\leqslant \bar t<\infty$ such that
  $p_{i}(t)=p_{i}^{*}$ for every $i\in\mathcal{I}$ and every
  $t\geqslant \bar t$. Furthermore, the constant power injections are
  balanced, i.e., $\sum_{i\in\mathcal{F}}p_{i}^{*}=0$.
\end{assumption}

Note that Assumption~\ref{assumption:finite-convergence} generalizes
the power injection profile from the commonly used time-invariant case
(e.g.~\cite{CZ-UT-NL-SL:14,ARB-VV:00}) to the finite-time convergent
case. Also, as our controller design here lies in the scope of primary
and secondary control, we assume that the power injection designed by
the tertiary control through economic dispatch is balanced after a
finite time.  Under Assumption~\ref{assumption:finite-convergence},
one can show~\cite{YZ-JC:18-cdc1} that, for the open-loop system
(i.e.,~\eqref{eqn:compact-form} with $\alpha\equiv\zeros_{n}$), the
trajectories $(f(t),\omega(t))$ globally converges to the unique
equilibrium point $(f_{\infty},\zeros_{n})$, where $f_{\infty}$ is
uniquely determined by the power injection profile and network
parameters.
        
\subsection{Control requirements}\label{control-goal}

Our goal is to design distributed state-feedback controllers, one per
each bus $i\in\II{u}$, which maintain stability of the power network
while at the same time cooperatively guaranteeing frequency invariance
and attractivity of nodes in a targeted subset $\II{\omega}$ of
$\II{u}$. Formally, the designed closed-loop system should meet the
following requirements.
\begin{enumerate}[wide]
\item \emph{Frequency invariance:} For each $i\in\II{\omega}$, let
  $\underline\omega_{i}\in\real$ and $\bar\omega_{i}\in\real$ be lower
  and upper safe frequency bounds, with
  $\underline\omega_{i}<\bar\omega_{i}$.  The trajectory of
  $\omega_{i}$ must stay inside
  $[\underline\omega_{i},\bar\omega_{i}]$, provided that its initial
  frequency $\omega_{i}(0)$ lies inside
  $[\underline\omega_{i},\bar\omega_{i}]$. This requirement guarantees
  that every targeted frequency always evolves inside the safe region.

\item \emph{Frequency attractivity:} For each $i\in\II{\omega}$, if
  $\omega_{i}(0)\nin[\underline\omega_{i},\bar\omega_{i}]$, then there
  exists a finite time $t_{0}$ such that
  $\omega_{i}(t)\in[\underline\omega_{i},\bar\omega_{i}]$ for every
  $t\geqslant t_{0}$. This requirement guarantees safe recovery from an
  undesired initial frequency.

\item \emph{Asymptotic stability:} The controller should only regulate
  the system's transients, i.e., the closed-loop system should
  globally converge to the same equilibrium point
  $(f_{\infty},\zeros_{n})$ of the open-loop system.
  
\item \emph{Lipschitz continuity:} The controller must have Lipschitz
  in its state argument.  This suffices to ensure the existence and
  uniqueness of solution for the closed-loop system and, furthermore,
  guarantees that the control action is robust to state measurement
  errors.

\item \emph{Economic cooperation:} The individual controllers
  $\alpha_{i}$, $i\in\II{u}$, should cooperate with each other to
  reduce the overall control effort measure by two norm.
%
  %

\item \emph{Distributed nature:} Every individual controller can only
  utilize the state and power injection information within a local
  region designed by operator. This reflects a practical requirement
  for implementation in larger-scale power networks, in which case
  centralized control strategies depending on global information may
  face critical challenge for real-time execution.
\end{enumerate}


\section{Centralized double-layered controller}\label{section:centralized-control}

In this section, we introduce a centralized controller that achieves
the requirements (i)-(v) identified in Section~\ref{control-goal}.
Based on this design, we later propose a distributed version that also
achieves requirement~(vi).

\begin{figure}[tbh]
  \centering%
  \includegraphics[width=1\linewidth]{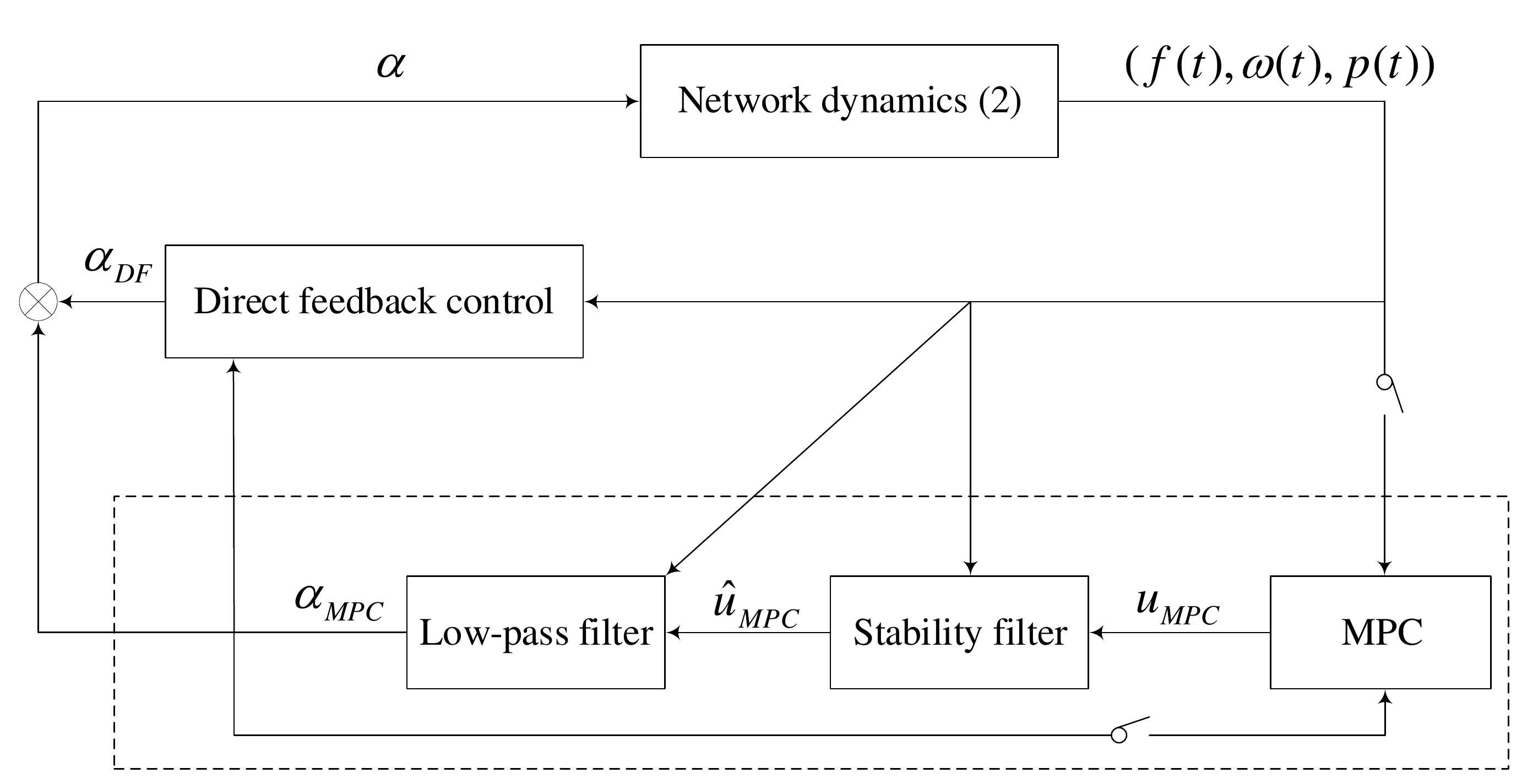}
  \caption{Block diagram of the closed-loop system.}\label{fig:block-diagram}
\end{figure}
We adopt the centralized control structure depicted in
Figure~\ref{fig:block-diagram}.  The control signal $\alpha$ consists
of two parts
\begin{align}\label{eqn:two-layer}
  \alpha = \alpha_{DF}+\alpha_{MPC}.
\end{align}
We next describe the role played by each part. The bottom layer solves
an optimization problem online. To do so, it combines an MPC component
cascaded with a stability filter and a low-pass filter. The MPC
component periodically and optimally allocates control resources,
while roughly adjusting the frequency trajectories as a first step to
achieve frequency invariance and attractivity.  Its output is designed
to be a piece-wise constant signal $u_{MPC}$, which becomes a
piece-wise continuous signal $\hat u_{MPC}$ after passing through
the stability filter. The low-pass filter ensures that the output
$\alpha_{MPC}$ of the bottom layer control is continuous in time to
avoid any discontinuous change in control signal. Using real-time
state information, the stability filter guarantees that $\alpha_{MPC}$
does not jeopardize system stability.  The bottom layer controller
achieves economic cooperation and stabilization, but does not
guarantee frequency invariance and attractivity. The top layer
controller is called direct feedback control since, unlike the bottom
layer control, can be directly computed in real time. This layer
slightly modifies the control generated by the bottom layer to ensure
frequency invariance and attractivity while maintaining stability of
the system.
        
\subsection{Bottom layer controller design via MPC and
  filters}\label{subsection:option-loop}

Here we formally describe each component in the bottom layer control
and analyze their properties.

\subsubsection{MPC component}
The MPC component operates on a periodic time schedule.  In each
sampling period, the MPC component aims to allocate control resources
over controlled nodes in an open-loop fashion based on the latest
sampling system state and forecasted power injection. Here, due to the
additional dynamics of the low-pass filter, the system state consists
of not only power network state $(f,\omega)$, but also the state of
low pass filter $\alpha_{MPC}$ that we later explain.  Formally, let
$\{\Delta^{j}\}_{j\in\naturals}$ be the collection of sampling
points. At time $t=\Delta^{j}$, let a piece-wise continuous signal
$p^{fcst}_{t}:[t,t+\tilde t]\rightarrow\real^{n}$ be the forecasted
value of the power injection $p$ for the first $\tilde t$ seconds
after~$t$.
%
%
 We discretize the
dynamics~\eqref{eqn:compact-form} and denote $N\triangleq \lceil\tilde
t/T\rceil$ as the length of the predicted step with some $T>0$. At
each $t=\Delta^{j}$, the MPC component updates its output by solving
the following optimization problem,
\begin{subequations}\label{opti:nonlinear}
  \begin{alignat}{2}
    & & & \min_{\hat F,\hat\Omega,\hat A, \hat u,\beta}\quad g(\hat
    u,\beta)\triangleq \sum_{i\in\II{u}}c_{i}\hat
    u^{2}_{i}+d\beta^{2}\notag
    \\
    &\text{s.t.}&\quad &\hat f(k+1)=\hat
    f(k)+TY_{b}D\hat\omega(k),\notag
    \\
    &&&M\hat\omega(k+1)=M\hat\omega(k)+T\big\{-E\hat\omega(k)-D^{T}\hat
    f(k)+\notag
    \\
    &&&\hspace{1.7cm}\hat p^{fcst}(k)+\hat u\big\},\hspace{0cm}\;
    \forall k\in[0,N-1]_{\naturals},\label{opti:nonlinear-1}
    \\
    &&& \hat
    \alpha_{i}(k+1)=\hat\alpha_{i}(k)+T\{-\hat\alpha_{i}(k)/T_{i}-\hat\omega_{i}(k)+\hat
    u_{i}\},\forall i\in\II{u},\notag
    \\
    &&& \hat\alpha_{i}\equiv0,\quad \forall
    i\in\mathcal{I}\backslash\II{u},\label{opti:filter}
    \\
    &&&\hat u\in\mathbb{A},\label{opti:control-location}
    \\
    &&&\hspace{-0.5cm}\hat f(0)=f(\Delta^{j}),\
    \hat\omega(0)=\omega(\Delta^{j}),\
    \hat\alpha(0)=\alpha_{MPC}(\Delta^{j}),\label{opti:nonlinear-2}
    \\
    &&& \hspace{-0.9cm}\underline\omega_{i}-\beta\leqslant
    \hat\omega_{i}(k+1)\leqslant \bar\omega_{i}+\beta, \ \forall
    i\in\II{\omega}\hspace{-.1cm},\forall
    k\in[0,N-1]_{\naturals},\hspace{-0.2cm}\label{opti:nonlinear-3}
    \\
    &&&
    %
    |\hat u_{i}|\leqslant\epsilon_{i}|\alpha_{MPC,i}(\Delta^{j})|,\quad
    \forall i\in\II{u}.\label{opti:sensitivity}
  \end{alignat}
\end{subequations}
%
%
In this optimization problem,~\eqref{opti:nonlinear-1} is the
discretized dynamics corresponding to~\eqref{eqn:compact-form} via
first-order discretization, and $\hat p^{fcst}(k)\triangleq
p^{fcst}_{\Delta^{j}}(\Delta^{j}+kT)$ for every
$k\in[0,N-1]_{\naturals}$;~\eqref{opti:filter} is the discretized
dynamics of the low-pass filter (explained below), with $T_{i}>0$
determining the filter bandwidth;~\eqref{opti:control-location}
indicates the availability of control signal
indexes;~\eqref{opti:nonlinear-2} is the initial state, where
$f(\Delta^{j}),\ \omega(\Delta^{j}),$ and $\alpha_{MPC}(\Delta^{j})$
are sampled state values at time
$t=\Delta^{j}$;~\eqref{opti:nonlinear-3} represents the relaxed
constraint on frequency invariance, where we allow the discretized
frequency $\hat\omega_{i}$ with $i\in\II{\omega}$ exceed its bounds
$\underline\omega_{i}$ and $\bar\omega_{i}$ at the cost of a penalty
term $\beta$;~\eqref{opti:sensitivity} bounds the control input $\hat
u_{i}$ via a coefficient $\epsilon_{i}>0$ as a function of the state
of the low-pass filter to limit the sensitivity to changes in the
latter; the cost function $g$ consists of the overall control effort
as well as a penalty term for frequency violation, where $c_{i}>0$ for
each $i\in\II{u}$ and $d>0$. In the above expression, we use the
compact notation
\begin{subequations}\label{sube:eqn:traj}
  \begin{align}
    \hat F&\triangleq[\hat f(0),\hat f(1),\cdots,f(N)],
    \\
    \hat \Omega&\triangleq[\hat\omega(0),\hat\omega(1),\cdots,\hat\omega(N)],
    \\
    \hat A&\triangleq[\hat \alpha(0),\hat \alpha(1),\cdots,\hat \alpha(N)],
    \\
    \hat P^{fcst}&\triangleq[\hat p^{fcst}(0),\hat
    p^{fcst}(1),\cdots,\hat p^{fcst}(N-1)],\label{eqn:p-G-graph}
  \end{align}
\end{subequations}
as the collection of discretized state trajectories of flow,
frequency, low-pass filter, and forecasted power injection.


We refer to the optimization problem~\eqref{opti:nonlinear} as
$\RR(\mathcal{G},\II{u},\II{\omega},p^{fcst}_{\Delta^{j}},
f(\Delta^{j}),\omega(\Delta^{j}),\alpha_{MPC}(\Delta^{j}))$ to
emphasize its dependence on network topology, nodal indexes with
exogenous control signals, nodal indexes with transient frequency
requirement, forecasted power injection, and state values at the
sampling time. If the context is clear, we simply use~$\RR$.  Let
$(\hat F^{*},\hat \Omega^{*},\hat A^{*}, \hat u^{*},\beta^{*})$ denote
its optimal solution. 

\begin{remark}\longthmtitle{Selection of frequency violation penalty
    coefficient}\label{rmk:violation-penalty}
  The role of the parameter $d$ in the objective function is to ensure
  that the controller that results from the MPC component does not
  completely disregard the frequency invariance and attractivity
  requirement.
  In the extreme case $d=0$ (i.e., no penalty for frequency
  violation), then we have $\hat u^{*} = \zeros_{n}$. As $d$ grows,
  the resulting MPC controller ensures that violations in frequency
  invariance become smaller. The top layer control introduced later
  adds additional input to the resulting controller to ensure the
  frequency requirements.  We come back to this point
  later in the simulations of Section~\ref{sec:simulations}. \oprocend
\end{remark}

Given the open-loop optimization problem~\eqref{opti:nonlinear}, the
function $u_{MPC}$ corresponding to the MPC component in
Figure~\ref{fig:block-diagram} is defined as follows: for
$j\in\naturals$ and $ t\in[\Delta^{j},\Delta^{j+1})$, let
\begin{align}\label{eqn:uMPC}
  u_{MPC}(t) \!=\! \hat
  u^{*}(\mathcal{G},\II{u},\II{\omega}, \hat
  p^{fcst}_{\Delta^{j}},f(\Delta^{j}),\omega(\Delta^{j}),\alpha_{MPC}(\Delta^{j})),
\end{align}
where in the right hand side we emphasize the dependence of $\hat u^{*}$ on the seven arguments.
%
%
Next, we characterize how the controller depends on the state value
at the sampling time and predicted power injection.
                   %
%

\begin{lemma}\longthmtitle{Piece-wise affine and continuous dependence
    of optimal solution on sampling state and predicted power injection}\label{lemma:pwl-optimal}
  The optimization problem $\RR(\mathcal{G},\II{u},\II{\omega},
  p^{fcst}_{\Delta^{j}},f(\Delta^{j}),\omega(\Delta^{j}),\alpha_{MPC}(\Delta^{j}))$
  in~\eqref{opti:nonlinear} has a unique optimal solution $(\hat
  F^{*},\hat \Omega^{*},\hat A^{*}, \hat
  u^{*},\beta^{*})$. Furthermore, given $\mathcal{G},\ \II{u}$, and
  $\II{\omega}$, $\hat u^{*}$ is a continuous and piece-wise affine in
  $(\hat
  P^{fcst},f(\Delta^{j}),\omega(\Delta^{j}),\alpha_{MPC}(\Delta^{j}))$,
  %
  %
  that is, there exist $l\in\naturals, \{H_{i}\}_{i=1}^{l}$,
  $\{S_{i}\}_{i=1}^{l}$, $\{h\}_{i=1}^{l}$, and $\{s_{i}\}_{i=1}^{l}$
    %
  %
  with suitable dimensions such that
  \begin{align}\label{eqn:pwa-u}
    \hat u^{*}=S_{i}z+s_{i}, \text{ if } z\in\left\{ y
      \big|H_{i}y\leqslant h_{i} \right\}\text{ and }
    i\in[1,l]_{\naturals}
  \end{align}
  %
  %
  holds for every $z\in\real^{(N+2)n+m}$, where $z$ is the collection
  of $(\hat
  P^{fcst},f(\Delta^{j}),\omega(\Delta^{j}),\alpha_{MPC}(\Delta^{j}))$
  in a column vector form.
\end{lemma}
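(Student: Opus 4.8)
The plan is to read problem~\eqref{opti:nonlinear} as a parametric quadratic program in the decision pair $(\hat u,\beta)$, treating $z$ as the parameter, and then to exploit the structural theory of such programs. First I would use the equality constraints—the discretized network dynamics in~\eqref{opti:nonlinear-1}, the filter recursion, the zero-padding $\hat\alpha_{i}\equiv 0$ for $i\in\mathcal{I}\backslash\II{u}$ in~\eqref{opti:filter}, and the initial conditions~\eqref{opti:nonlinear-2}—to solve for the stacked trajectories $\hat F,\hat\Omega,\hat A$ explicitly as affine functions of $\hat u$ and of the parameter $z$. Substituting these expressions removes $\hat F,\hat\Omega,\hat A$ from the problem and leaves a reduced program whose free variables are the components $\{\hat u_{i}\}_{i\in\II{u}}$ (the remaining entries of $\hat u$ vanish by~\eqref{opti:control-location}) together with the slack $\beta$. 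The cost $g$ is unchanged and, since $c_{i}>0$ and $d>0$, is a strictly convex quadratic in these reduced variables, while the frequency bounds~\eqref{opti:nonlinear-3} become linear inequalities in $(\hat u,\beta,z)$.

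For the uniqueness claim I would observe that, for each fixed $z$, the reduced feasible set is a nonempty closed convex polyhedron: it is nonempty because $\hat u=\zeros_{n}$ with $\beta$ large satisfies every constraint, and it is closed and convex because all constraints are affine. The sensitivity bound~\eqref{opti:sensitivity} confines $\hat u$ to a box, and the term $d\beta^{2}$ makes $g$ coercive in $\beta$, so every nonempty sublevel set of $g$ over this feasible set is compact and a minimizer exists. Strict convexity of $g$ then forces the minimizer $(\hat u^{*},\beta^{*})$ to be unique, and since the dynamics determine $\hat F^{*},\hat\Omega^{*},\hat A^{*}$ uniquely from $\hat u^{*}$ and $z$, the full optimal solution is unique.

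The piecewise-affine and continuous dependence is where the main difficulty lies, and the obstacle is the sensitivity constraint~\eqref{opti:sensitivity}: its right-hand side $\epsilon_{i}|\alpha_{MPC,i}(\Delta^{j})|$ is only piecewise affine—not affine—in $z$, which falls outside the standard parametric-QP template that requires constraints of the form $G(\hat u,\beta)\leqslant w+Sz$. To restore that template I would partition the parameter space $\real^{(N+2)n+m}$ according to the signs of the coordinates $\{\alpha_{MPC,i}(\Delta^{j})\}_{i\in\II{u}}$, obtaining at most $2^{|\II{u}|}$ polyhedral sign-regions. On each such region every $|\alpha_{MPC,i}(\Delta^{j})|$ equals $\pm\alpha_{MPC,i}(\Delta^{j})$ with a fixed sign, so the right-hand side is affine in $z$ and the reduced program becomes a genuine strictly convex parametric quadratic program. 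I would then invoke the classical critical-region analysis for such programs to conclude that, restricted to each sign-region, $\hat u^{*}$ is a continuous piecewise-affine map of $z$, defined over finitely many polyhedral critical regions $\{y\mid H_{i}y\leqslant h_{i}\}$ with affine pieces $S_{i}z+s_{i}$.

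Finally I would check that these per-region maps assemble into one globally continuous piecewise-affine function, yielding~\eqref{eqn:pwa-u}. The only seams are the shared faces where some $\alpha_{MPC,i}(\Delta^{j})=0$; there the sensitivity bound collapses to $\hat u_{i}=0$ from both adjacent regions, so the two feasible problems coincide on the seam and, by uniqueness, their optimizers agree. Combining this matching with the within-region continuity supplied by the parametric-QP theory gives continuity across the whole of $\real^{(N+2)n+m}$, while taking the common refinement of the finitely many sign-regions and their critical regions produces the single finite polyhedral partition $\{H_{i},h_{i}\}_{i=1}^{l}$ with affine selections $\{S_{i},s_{i}\}_{i=1}^{l}$ asserted in the statement. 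I expect the bookkeeping of this refinement, together with the clean verification of optimizer matching on the zero-sign faces, to be the most delicate part of the argument.
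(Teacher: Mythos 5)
Your proposal is correct and follows essentially the same route as the paper: feasibility via $\hat u=\zeros_{n}$ with large $\beta$, uniqueness from strict convexity, a partition of the parameter space into the $2^{|\II{u}|}$ sign-regions of $\{\alpha_{MPC,i}(\Delta^{j})\}_{i\in\II{u}}$ to linearize the absolute values in~\eqref{opti:sensitivity}, the classical multiparametric-QP result on each region, and gluing across the shared faces by invoking uniqueness of the optimizer there. Your elimination of the equality constraints (so the reduced Hessian is positive definite) and your explicit remark that the trajectories $\hat F^{*},\hat\Omega^{*},\hat A^{*}$ are recovered uniquely from $\hat u^{*}$ and $z$ are minor tightenings of the paper's argument, not a different approach.
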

\begin{proof}
  We start by noting that $\RR$ is feasible (hence at least one
  optimal solution exists) for any given
  $z$. 
  This is because, given a state trajectory
  $(\hat F,\hat \Omega,\hat A)$
  of~\eqref{opti:nonlinear-1}-\eqref{opti:filter} with input
  $\hat u=\zeros_{n}$ and initial condition~\eqref{opti:nonlinear-2},
  choosing a sufficiently large $\beta$ makes it satisfy
  constraint~\eqref{opti:nonlinear}.  The uniqueness follows from the
  strict convexity of $g$ and the linearity of constraints.  To show
  continuity and piece-wise affinity, we separately consider
  $2^{|\II{u}|}$ cases, depending on the sign of each
  $\{\alpha_{MPC,i}(\Delta^{j})\}_{i\in\II{u}}$. Specifically, let
  $\eta\triangleq\{\eta_{i}\}_{i\in\II{u}}\in\{1,-1\}^{|\II{u}|}$ and
  define
  $\mathfrak{B}^{\eta}\triangleq\left\{ z
    \big|(-1)^{\eta_{i}}\alpha_{MPC,i}(\Delta^{j})\geqslant 0,\
    \forall i\in\II{u} \right\}$.  Note that every $z$ lies in at
  least one of these sets and that, in any $\mathfrak{B}^{\eta}$, the
  sign of each $\alpha_{MPC,i}(\Delta^{j})$ with $i\in\II{u}$ is
  fixed.  Hence all the $|\II{u}|$ constraints
  in~\eqref{opti:sensitivity} can be transformed into one of the
  following forms
  \begin{subequations}\label{sube:ineq:sgn-alpha}
    \begin{alignat}{2}
    \hspace{-0.38cm}  -\epsilon_{i}\alpha_{MPC,i}(\Delta^{j}) & \leqslant \hat
      u_{i}\leqslant\epsilon_{i}\alpha_{MPC,i}(\Delta^{j}) & & \text{if
      }\alpha_{MPC,i}(\Delta^{j})\geqslant 0,\label{sube:ineq:sgn-alpha-1}
      \\
        \hspace{-0.2cm}  \epsilon_{i}\alpha_{MPC,i}(\Delta^{j}) & \leqslant \hat
      u_{i}\leqslant-\epsilon_{i}\alpha_{MPC,i}(\Delta^{j}) && \ 
      \text{if } \alpha_{MPC,i}(\Delta^{j})\leqslant
      0 . \label{sube:ineq:sgn-alpha-2}
    \end{alignat}
  \end{subequations}
  %
  %
  Note that if $\alpha_{MPC,i}(\Delta^{j})=0$, then $\hat
  u_{i}=0$. Therefore, in every $\mathfrak{B}^{\eta}$, $z$ appears in
  $\RR$ in a linear fashion; hence,
  it is easy to re-write $\RR$ into the following form:
  \begin{alignat}{2}
    &\min_{c} & \quad & c^{T}Kc\notag
    \\
    &\text{s.t.}&\quad &Gc\leqslant W+J^{\eta}z,
  \end{alignat}
  where $c$ is the collection of $(\hat F,\hat\Omega,\hat A, \hat
  u,\beta)$ in vector form and $K\succeq 0$, $G$, $W$ and $J^{\eta}$
  are matrices with suitable dimensions. Note that only $J^{\eta}$
  depends on $\eta$. By~\cite[Theorem~1.12]{FB:03}, for every
  $\eta\in\{-1,1\}^{|\II{u}|}$, $c^{*}$ is a continuous and piece-wise
  affine function of $z$ whenever $z\in\mathfrak{B^{\eta}}$.  Since
  each $\mathfrak{B^{\eta}}$ consists of only linear constraints and
  the union of all $\mathfrak{B^{\eta}}$'s with
  $\eta\in\{1,-1\}^{|\II{u}|}$ is $\real^{(N+2)n+m}$, one has that
  $c^{*}$ is piece-wise affine in $z$ on $\real^{(N+2)n+m}$. Lastly,
  to show the continuous dependence of $c^{*}$ on $z$ on
  $\real^{(N+2)n+m}$, note that since such a dependence holds on every
  closed set $\mathfrak{B^{\eta}}$, we only need to prove that $c^{*}$
  is unique for every $z$ lying on the boundary shared by different
  $\mathfrak{B^{\eta}}$'s. This holds trivially as $c^{*}$ is unique
  for every $z\in\real^{(N+2)n+m}$, which we have proven above.
\end{proof}

Notice that Lemma~\ref{lemma:pwl-optimal} implies that $\hat u^{*}$ is
globally Lipschitz in $z$ (and hence in the sampled state
$f(\Delta^{j}),\omega(\Delta^{j})$, and $\alpha_{MPC}(\Delta^{j})$),
with $L\triangleq\max_{i\in[1,l]_{\naturals}}\|F_{i}\|$ serving as a
global Lipschitz constant. Another interesting consequence of this
result is that
it provides an alternative to directly solving the optimization
problem~$\RR$. In fact, one can compute and store offline
$\{H_{i}\}_{i=1}^{l}$, $\{S_{i}\}_{i=1}^{l}$, $\{h\}_{i=1}^{l}$, and
$\{s_{i}\}_{i=1}^{l}$, and then compute $\hat u^{*}$ online
using~\eqref{eqn:pwa-u}. However, this approach
faces practical difficulties regarding storage
capacity~\cite{JBR-DQM:09}, as the number $l$ grows exponentially with
system order $m+n$, input size $|\II{u}|$, as well as the horizon
length~$N$.

\subsubsection{Stability and low-pass filter}
Next we introduce the stability and low-pass filters.  Note that for
any time $t\in(\Delta_j,\Delta_{j+1})$, due to the sampling mechanism,
$u_{MPC}(t)$ depends on the old sampled state at time $\Delta_j$, as
opposed to the state information at current time $t$. Since such a
lack of update may jeopardize system stability, we cascade a stability
filter that depends on the current state after the MPC component to
filter out the unstable part in $u_{MPC}$. The goal of low-pass filter
is to simply ensure that the output of the bottom layer is continuous
in time.
Formally, for every $i\in\II{u}$ at any $t\geqslant 0$, define the
stability filter as
\begin{align}\label{eqn:stability-filter}
\hspace{-1cm}  \hat u_{MPC,i}(\alpha_{MPC}(t),u_{MPC}(t))
  &\notag
  \\
  &\hspace{-3cm}=\sat(u_{MPC,i}(t);\epsilon_{i}|\alpha_{MPC,i}(t)|,-\epsilon_{i}|\alpha_{MPC,i}(t)|),
\end{align}
and define the low-pass filter as                      %
\begin{align}\label{eqn:lp-filter}
  \dot\alpha_{MPC,i}(t)&=-\frac{1}{T_{i}}\alpha_{MPC,i}(t)-\omega_{i}(t)+\hat
  u_{MPC,i}(t),\quad\forall i\in\II{u},\notag
  \\
  \alpha_{MPC,i}&\equiv0,\quad\forall i\in\mathcal{I}\backslash\II{u}.
\end{align}
%
%
Note that the low-pass filter model matches the structure in the
discretized model~\eqref{opti:filter}. Also,
both~\eqref{eqn:stability-filter} and~\eqref{eqn:lp-filter} can be
implemented in a distributed fashion: $\alpha_{MPC,i}$ depends on
$\omega_{i}$ and $\hat u_{MPC,i}$, and $\hat u_{MPC,i}$ only relies on
$\alpha_{MPC,i}$ and $u_{MPC,i}$, both of which are local information
for node $i$. For simplicity, we interchangeably use $\hat
u_{MPC,i}(\alpha_{MPC}(t),u_{MPC}(t))$ and $\hat u_{MPC,i}(t)$.

The following result shows the Lipschitz continuity of $\hat u_{MPC}$
and points out a condition it satisfies. Later we show that this condition
 ensures stability of the closed-loop system.

\begin{lemma}\longthmtitle{Lipschitz continuity and stability
    condition} \label{lemma:Lipschitz-stability}
  For the signal $\hat u_{MPC}$ defined
  in~\eqref{eqn:stability-filter}, $\hat u_{MPC}$ is Lipschitz in
  system state at every sampling time $t=\Delta^{j}$ with
  $j\in\naturals$. Additionally, it holds that
  \begin{align}\label{ineq:stability-condition}
    \alpha_{MPC,i}(t)\hat
    u_{MPC,i}(t)\leqslant\epsilon_{i}\alpha_{MPC,i}^{2}(t),\quad\forall
    t\geqslant 0,\ \forall i\in\mathcal{I}.
  \end{align}
\end{lemma}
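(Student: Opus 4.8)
The plan is to treat the two assertions separately: I would first dispatch the stability condition~\eqref{ineq:stability-condition}, which is essentially algebraic, and then establish the Lipschitz property by viewing $\hat u_{MPC}$ as a composition of the saturation map with functions already known to be Lipschitz. For the stability condition, fix $t\geqslant 0$. For $i\in\mathcal{I}\setminus\II{u}$ we have $\alpha_{MPC,i}(t)=0$ by~\eqref{eqn:lp-filter}, so both sides of~\eqref{ineq:stability-condition} vanish. For $i\in\II{u}$, the definition~\eqref{eqn:stability-filter} saturates $u_{MPC,i}(t)$ to the symmetric interval $[-\epsilon_{i}|\alpha_{MPC,i}(t)|,\ \epsilon_{i}|\alpha_{MPC,i}(t)|]$, so that $|\hat u_{MPC,i}(t)|\leqslant\epsilon_{i}|\alpha_{MPC,i}(t)|$ regardless of the value of $u_{MPC,i}(t)$. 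The condition then follows from the elementary product estimate
\begin{align*}
  \alpha_{MPC,i}(t)\hat u_{MPC,i}(t) &\leqslant |\alpha_{MPC,i}(t)|\,|\hat u_{MPC,i}(t)| \\
  &\leqslant \epsilon_{i}\,\alpha_{MPC,i}^{2}(t).
\end{align*}

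For the Lipschitz claim, I would fix a sampling time $t=\Delta^{j}$ and regard $\hat u_{MPC,i}$ as a function of the system state $(f(\Delta^{j}),\omega(\Delta^{j}),\alpha_{MPC}(\Delta^{j}))$ through the two state-dependent quantities entering the saturation: the argument $u_{MPC,i}(\Delta^{j})$ and the symmetric bound $b_{i}\triangleq\epsilon_{i}|\alpha_{MPC,i}(\Delta^{j})|$. By Lemma~\ref{lemma:pwl-optimal}, $u_{MPC,i}(\Delta^{j})$ is globally Lipschitz in the state; and $b_{i}$ is Lipschitz in $\alpha_{MPC,i}(\Delta^{j})$ with constant $\epsilon_{i}$, since $|\cdot|$ is $1$-Lipschitz. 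It therefore suffices to show that the map $(a,b)\mapsto\sat(a;b,-b)$ is jointly Lipschitz for $b\geqslant 0$, which is the crux of the argument.

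To establish this, for two pairs $(a^{1},b^{1})$ and $(a^{2},b^{2})$ I would split the difference with the triangle inequality,
\begin{align*}
  &|\sat(a^{1};b^{1},-b^{1})-\sat(a^{2};b^{2},-b^{2})| \\
  &\quad\leqslant |\sat(a^{1};b^{1},-b^{1})-\sat(a^{2};b^{1},-b^{1})| \\
  &\quad\quad + |\sat(a^{2};b^{1},-b^{1})-\sat(a^{2};b^{2},-b^{2})|.
\end{align*}
The first term is bounded by $|a^{1}-a^{2}|$ because the saturation is $1$-Lipschitz in its argument for fixed limits; the second is bounded by $|b^{1}-b^{2}|$ because changing a symmetric saturation limit moves the saturated value by no more than the change in that limit. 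Composing these two bounds with the Lipschitz estimates for $u_{MPC,i}$ and $b_{i}$ yields a global Lipschitz constant for $\hat u_{MPC,i}$ in the state at $t=\Delta^{j}$, and taking the maximum over $i\in\II{u}$ (the coordinates with $i\notin\II{u}$ being identically zero) gives the claim.

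The main obstacle is precisely this joint Lipschitz continuity of the saturation: both the signal being saturated and the saturation bounds depend on the state, and the bounds pass through the nonsmooth map $\alpha_{MPC,i}\mapsto|\alpha_{MPC,i}|$. The two-step decomposition above is what isolates these two dependencies and reduces the problem to the elementary one-variable Lipschitz properties of $\sat$, sidestepping any need to differentiate through the kink at $\alpha_{MPC,i}=0$. Everything else—the feasibility-to-Lipschitz link for $u_{MPC,i}$ and the $1$-Lipschitzness of $|\cdot|$—is routine once this step is in place.
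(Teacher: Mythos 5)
Your proof is correct, but it takes a genuinely different route from the paper's. The paper's argument for the Lipschitz claim is a one-liner that exploits constraint~\eqref{opti:sensitivity}: since the optimal solution already satisfies $|\hat u_{i}^{*}|\leqslant\epsilon_{i}|\alpha_{MPC,i}(\Delta^{j})|$ and $u_{MPC,i}(\Delta^{j})=\hat u_{i}^{*}$, the saturation in~\eqref{eqn:stability-filter} is \emph{inactive} at every sampling time, so $\hat u_{MPC,i}(\Delta^{j})=\hat u_{i}^{*}$ identically, and Lipschitz continuity is inherited directly from Lemma~\ref{lemma:pwl-optimal} with no analysis of the saturation map at all. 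You instead prove the joint Lipschitz continuity of $(a,b)\mapsto\sat(a;b,-b)$ and compose it with the Lipschitz maps $z\mapsto\hat u^{*}$ and $\alpha_{MPC,i}\mapsto\epsilon_{i}|\alpha_{MPC,i}|$; your two-step triangle-inequality decomposition and both one-variable bounds ($1$-Lipschitz in the argument, $1$-Lipschitz in a symmetric limit) check out. What your route buys is robustness: it does not rely on the MPC output respecting the sensitivity bound, so it would survive, e.g., an MPC component whose output violates~\eqref{opti:sensitivity} due to model mismatch (cf.\ the discussion in Remark~\ref{rmk:independent-design}). What it costs is that you work harder than necessary and miss the structural point the paper is making, namely that the stability filter is a no-op at sampling instants and only acts between them. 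Your handling of the stability condition~\eqref{ineq:stability-condition} matches the paper's (the paper simply says it follows from the definition of the saturation; you spell out the same estimate $|\hat u_{MPC,i}|\leqslant\epsilon_{i}|\alpha_{MPC,i}|$ followed by the product bound).
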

\begin{proof}
  If $t=\Delta^{j}$, then since $|\hat
  u_{i}^{*}|\leqslant\epsilon_{i}|\alpha_{MPC,i}(\Delta^{j})|$ and
  $u_{MPC,i}(\Delta^{j})=\hat u_{i}^{*}$ for every $i\in\II{u}$,
  by~\eqref{eqn:stability-filter}, it holds that $\hat
  u_{MPC,i}(\alpha_{MPC}(t),u_{MPC}(t))|_{t=\Delta^{j}}=\hat
  u_{i}^{*}$. The Lipschitz continuity follows by
  Lemma~\ref{lemma:pwl-optimal}.
  %
  %
  Condition~\eqref{ineq:stability-condition}
  simply follows by the definition of saturation function.
\end{proof}

By Lemma~\ref{lemma:Lipschitz-stability}, since $\hat u_{MPC}$ is
Lipschitz at every sampling time, if the top layer controller is also
Lipschitz (which is the case, see Section~\ref{subsection:top layer}),
then the solution of the closed-loop system exists and is unique and
continuous in time. By~\eqref{eqn:stability-filter} and noting that
$u_{MPC}$ is piece-wise constant in time, one has $\hat u_{MPC}$ is
piece-wise continuous, which further implies that $\alpha_{MPC}$ is
indeed continuous in time.
%
%

\begin{remark}\longthmtitle{Almost independent design of MPC component
    and stability filter}\label{rmk:independent-design}
  Note that, regardless of the MPC component output $u_{MPC}$, the
  output of the stability filter $\hat u_{MPC}$ defined
  in~\eqref{eqn:stability-filter} always meets
  condition~\eqref{ineq:stability-condition}.
%
%
  This strong property provides flexibility in the MPC component design
  and robustness against, e.g., inaccuracy in sampled state
  measurement, forecasted power injection, as well as system
  parameters. However, to ensure the Lipschitz continuity in
  Lemma~\ref{lemma:Lipschitz-stability}, we cast
  constraint~\eqref{opti:sensitivity} that shares a same coefficient
  $\epsilon_{i}$ in the stability
  filter~\eqref{eqn:stability-filter}. To address this aspect, the
  designs of the MPC component and stability filter are not completely
  independent.  \oprocend
\end{remark}

\begin{remark}\longthmtitle{Unnecessity of stability filter with
    continuous sampling MPC component}\label{rmk:unnecessity-filter} 
  Our purpose of considering the stability filter here is to ensure
  that the filtered signal $\hat u_{MPC}$ satisfies
  condition~\eqref{ineq:stability-condition}. However, if the MPC
  component can ideally sample the system state in a continuous
  fashion (as opposed to periodic sampling), then there is no need to
  additionally add such a stability filter, as pre-filtered signal
  already satisfies such a condition.  In detail, if we define
  $u_{MPC}(t)=\hat u^{*}(\mathcal{G},\II{u},\II{\omega},
  p^{fcst}_{t},f(t),\omega(t),\alpha_{MPC}(t))$ for every $t\geqslant
  0$, then due to constraint~\eqref{opti:sensitivity}, one can easily
  see $\alpha_{MPC,i}(t)
  u_{MPC,i}(t)\leqslant\epsilon_{i}\alpha_{MPC,i}^{2}(t), \forall
  t\geqslant 0,\ \forall i\in\mathcal{I}$. Hence, in this scenario,
  one can safely ignore the stability filter and directly have
  $u_{MPC}$ as the input of the low-pass filter.
  \oprocend
\end{remark}

%
%

\subsection{Top layer design through direct feedback}\label{subsection:top layer}
%
%

Although the bottom layer control attempts to achieve frequency
invariance and attracitiviy requirements by constraining the predicted
frequency trajectory via~\eqref{opti:nonlinear-3}, it cannot solely
guarantee the two requirements. To address this aspect, we construct
the top layer control from~\cite{YZ-JC:18-cdc1} that is precisely
designed to correct potential violations of the frequency requirements
by kicking in as the margin of violations gets smaller. Formally, for
every $i\in\II{\omega}$, let $\bar\gamma_{i},\underline \gamma_{i}>0$,
and $\underline \omega_{i}^{\text{thr}}, \bar\omega_{i}^{\text{thr}}
\in \real$ with $\underline \omega_{i}< \underline
\omega_{i}^{\text{thr}} < 0 <
\bar\omega_{i}^{\text{thr}}<\bar\omega_{i}$.  Define the top layer
controller $\alpha_{DF}$ as in~\eqref{eqn:second-layer-control}.

Note that the top layer control signal is only available for node with
index in $\II{\omega}$, and that $\alpha_{DF}$ can be implemented
distributedly, in the sense that for each $\alpha_{DF,i}$ with
$i\in\II{\omega}$ regulated at node $i$, it only requires its nodal
frequency $\omega_{i}$, aggregated power flow $[D^{T}]_{i}f$, power
injection $p_{i}$, as well as the local bottom layer control signal
$\alpha_{MPC,i}$. In addition, we have shown~\cite{YZ-JC:18-cdc1} that
$\alpha_{DF}$ is locally Lipschitz in its first argument.  If the
context is clear, we may interchangeably use
$\alpha_{DF,i}(x(t),p(t),\alpha_{MPC}(t))$
(resp. $v_{i}(x(t),\alpha_{MPC}(t),p(t))$) and $\alpha_{DF,i}(t)$
(resp. $v_{i}(t)$).

\begin{figure*}[htb]
\begin{subequations}\label{eqn:second-layer-control}
  \begin{alignat}{2}
    &\forall i\in\II{\omega }, \text{ let }
    \alpha_{DF,i}(x(t),p(t),\alpha_{MPC}(t)) &&=
    \begin{cases}
      \min\{0,\frac{\bar\gamma_{i}(\bar\omega_{i}-\omega_{i}(t))}{\omega_{i}(t)-\bar\omega_{i}^{\text{thr}}}+v_{i}(x(t),\alpha_{MPC}(t),p(t))\}
      & \omega_{i}(t)>\bar\omega_{i}^{\text{thr}},
      \\
      0 & \underline\omega_{i}^{\text{thr}}\leqslant
      \omega_{i}(t)\leqslant \bar\omega_{i}^{\text{thr}},
      \\
      \max\{0,\frac{\underline\gamma_{i}(\underline\omega_{i}-\omega_{i}(t))
      }{\underline\omega_{i}^{\text{thr}}-\omega_{i}(t)}+v_{i}(x(t),\alpha_{MPC}(t),p(t))\}
      & \omega_{i}(t)<\underline\omega_{i}^{\text{thr}},
    \end{cases}
    \\
&\hspace{2cm}v_{i}(x(t),\alpha_{MPC}(t),p(t))&&\triangleq
  E_{i}\omega_{i}(t)+[D^{T}]_{i}f(t)-p_{i}(t)-\alpha_{MPC,i}(t),
  \\
 & \forall i\in\mathcal{I}\backslash\II{\omega},\text{ let }\hspace{1.5cm}\alpha_{DF,i}&&\equiv0.
  \end{alignat}
\end{subequations}
  \hrulefill
\end{figure*}

\subsection{Closed-loop stability, frequency invariance, and frequency
  attractivity analysis}
With both layers introduced, we are ready to analyze the stability of
the closed-loop system and show that it meets the
requirements~(i)-(iii) in Section~\ref{control-goal}. Notice that as
we individually introduce each component in the control scheme, we
have shown that all components are Lipschitz, and the economic
cooperation is encoded in the MPC component; therefore, the
requirements~(iv) and~(v) are met.

\begin{theorem}\longthmtitle{Centralized double-layered control with
    stability and frequency guarantees}\label{thm:two-layer-control}
  Under Assumption~\ref{assumption:finite-convergence}, if
  $\epsilon_{i}T_{i}<1$ for every $i\in\II{u}$, then the
  system~\eqref{eqn:compact-form} with controller defined
  by~\eqref{eqn:two-layer},~\eqref{eqn:uMPC},~\eqref{eqn:stability-filter},~\eqref{eqn:lp-filter},
  and \eqref{eqn:second-layer-control} meets
  requirements~(i)-(iii). Furthermore, $\alpha(t)$, $\alpha_{MPC}(t)$,
  and $\alpha_{DF}(t)$ converge to $\zeros_{n}$ as $t\rightarrow
  \infty$.
\end{theorem}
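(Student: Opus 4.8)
The plan is to treat requirements (i)--(iii) separately, isolating the single ingredient that couples the two layers, namely the stability condition~\eqref{ineq:stability-condition}. For \emph{frequency invariance} and \emph{attractivity} (i)--(ii), I would write the closed loop componentwise as $M_{i}\dot\omega_{i}=\alpha_{DF,i}-v_{i}$, where $v_{i}$ is the quantity in~\eqref{eqn:second-layer-control} that absorbs the damping, the power flow, the injection, and the \emph{entire} bottom-layer signal $\alpha_{MPC,i}$. Since $\alpha_{DF}$ in~\eqref{eqn:second-layer-control} is exactly the barrier-type feedback of~\cite{YZ-JC:18-cdc1}, designed to shape $\dot\omega_{i}$ against an arbitrary measured drift $v_{i}$, requirements (i)--(ii) follow from that reference once I verify that $\alpha_{MPC}$ enters only through $v_{i}$ and that the closed loop has unique solutions; the latter is guaranteed by Lemma~\ref{lemma:Lipschitz-stability} together with the Lipschitzness of $\alpha_{DF}$. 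These two requirements hold for all $t\geqslant 0$ and do not invoke Assumption~\ref{assumption:finite-convergence}.

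For \emph{asymptotic stability} (iii) I would restrict to $t\geqslant\bar t$, where $p\equiv p^{*}$, and build the augmented Lyapunov function $V=\tfrac12\tilde f^{T}Y_{b}^{-1}\tilde f+\tfrac12\omega^{T}M\omega+\tfrac12\sum_{i\in\II{u}}\alpha_{MPC,i}^{2}$ with $\tilde f\triangleq f-f_{\infty}$. Using $D^{T}f_{\infty}=p^{*}$, $\dot f=Y_{b}D\omega$, and $M\dot\omega=-E\omega-D^{T}\tilde f+\alpha$, the flow--frequency coupling $\tilde f^{T}D\omega-\omega^{T}D^{T}\tilde f$ cancels by skew symmetry, giving
\[
\dot V=-\omega^{T}E\omega+\sum_{i\in\II{\omega}}\omega_{i}\alpha_{DF,i}+\sum_{i\in\II{u}}\omega_{i}\alpha_{MPC,i}+\sum_{i\in\II{u}}\alpha_{MPC,i}\dot\alpha_{MPC,i}.
\]
Substituting~\eqref{eqn:lp-filter} for $\dot\alpha_{MPC,i}$, the two cross terms $\pm\sum_{i\in\II{u}}\omega_{i}\alpha_{MPC,i}$ cancel, and~\eqref{ineq:stability-condition} bounds $\alpha_{MPC,i}\hat u_{MPC,i}\leqslant\epsilon_{i}\alpha_{MPC,i}^{2}$, leaving
\[
\dot V\leqslant-\omega^{T}E\omega+\sum_{i\in\II{\omega}}\omega_{i}\alpha_{DF,i}-\sum_{i\in\II{u}}\Big(\tfrac{1}{T_{i}}-\epsilon_{i}\Big)\alpha_{MPC,i}^{2}.
\]
The hypothesis $\epsilon_{i}T_{i}<1$ makes the last sum nonpositive, $E\succ0$ makes the first term nonpositive, and the case structure of~\eqref{eqn:second-layer-control} yields $\omega_{i}\alpha_{DF,i}\leqslant0$ for every $i\in\II{\omega}$ (the upper branch activates only when $\omega_{i}>\bar\omega_{i}^{\text{thr}}>0$ and is nonpositive, and symmetrically for the lower branch). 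Hence $\dot V\leqslant0$.

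Crucially, this inequality holds pointwise in time \emph{regardless} of the sampled, piece-wise constant nature of $u_{MPC}$, because the only property of the bottom layer it uses is~\eqref{ineq:stability-condition}, which Lemma~\ref{lemma:Lipschitz-stability} guarantees at all times; this is what lets me sidestep the hybrid character of the closed loop. From $\dot V\leqslant0$ and the radial unboundedness of $V$ in $(\tilde f,\omega,\alpha_{MPC})$ I obtain precompact trajectories, and LaSalle's invariance principle confines the limit set to $\{\dot V=0\}=\{\omega=\zeros_{n},\ \alpha_{MPC}=\zeros_{n}\}$. On any invariant trajectory there, the sensitivity constraint~\eqref{opti:sensitivity} forces $u_{MPC}\equiv\zeros_{n}$ (since $|\hat u_{i}^{*}|\leqslant\epsilon_{i}|\alpha_{MPC,i}|=0$), so the dynamics reduce to the open loop; then $\dot\omega\equiv\zeros_{n}$ gives $D^{T}\tilde f=\zeros_{n}$ while $\tilde f$ stays in the reachable subspace $\operatorname{range}(Y_{b}D)$. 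As $Y_{b}\succ0$ implies $\operatorname{range}(Y_{b}D)\cap\nulld(D^{T})=\{\zeros\}$, this forces $f=f_{\infty}$, proving (iii). Finally, $\alpha_{MPC}\to\zeros_{n}$ is part of this limit set; $\omega\to\zeros_{n}$ eventually places each $\omega_{i}$ in the dead zone $[\underline\omega_{i}^{\text{thr}},\bar\omega_{i}^{\text{thr}}]$, so $\alpha_{DF}\to\zeros_{n}$ by~\eqref{eqn:second-layer-control}, whence $\alpha\to\zeros_{n}$. The step I expect to be most delicate is making the invariance principle rigorous for this sampled/switched closed loop, and in particular justifying that the limiting dynamics is the autonomous open loop (via the sensitivity constraint collapsing $u_{MPC}$ to zero) before invoking the flow-space intersection to pin down $f_{\infty}$.
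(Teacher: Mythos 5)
Your proof follows essentially the same route as the paper: the identical Lyapunov function (up to the $Y_b^{-1}$ weighting on the flow term, which is in fact the version that makes the cross terms $\tilde f^{T}D\omega$ and $-\omega^{T}D^{T}\tilde f$ cancel, so your form is the more careful one), the same use of $\omega_i\alpha_{DF,i}\leqslant 0$ and of condition~\eqref{ineq:stability-condition} to get $\dot V\leqslant -\omega^{T}E\omega-\sum_{i\in\II{u}}(1/T_i-\epsilon_i)\alpha_{MPC,i}^2$, LaSalle for convergence, and the observation that $\alpha_{MPC,i}$ enters the frequency dynamics only through $v_i$ so that the barrier feedback of~\eqref{eqn:second-layer-control} yields the Nagumo conditions $\dot\omega_i\leqslant 0$ at $\omega_i=\bar\omega_i$ (resp. $\geqslant 0$ at $\underline\omega_i$). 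Your elaboration of the limit set (collapsing $u_{MPC}$ to zero via~\eqref{opti:sensitivity} and pinning down $f_\infty$ through $\operatorname{range}(Y_bD)\cap\nulld(D^{T})=\{\zeros\}$) supplies detail the paper omits, and is a welcome addition.

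The one genuine misstep is the claim that requirements (i)--(ii) ``do not invoke Assumption~\ref{assumption:finite-convergence}.'' That is true for invariance (i), but not for attractivity (ii). When $\omega_i(0)>\bar\omega_i$, the upper branch of~\eqref{eqn:second-layer-control} gives $M_i\dot\omega_i=\min\{-v_i,\ \bar\gamma_i(\bar\omega_i-\omega_i)/(\omega_i-\bar\omega_i^{\text{thr}})\}$, whose second argument vanishes as $\omega_i\downarrow\bar\omega_i$; by itself this only guarantees asymptotic approach to the boundary, not entry into $[\underline\omega_i,\bar\omega_i]$ in \emph{finite} time, and $-v_i$ carries no sign guarantee. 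The paper (correctly) derives (ii) from (i) together with (iii): the convergence $\omega_i\to 0\in(\underline\omega_i,\bar\omega_i)$ — which does rest on Assumption~\ref{assumption:finite-convergence} — forces a finite entry time $t_0$, after which invariance keeps the trajectory inside. Since you do prove (iii), your argument is repaired simply by rerouting this dependency, but as written the logical claim about (ii) is wrong.
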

\begin{proof}
  We first consider requirement~(iii). Without loss of generality, we
  can assume that $p$ is constant so that the closed-loop system is
  time-invariant. Otherwise one can simply consider $t=\bar t$ as the
  initial state.
  %
  %
 Select the energy function
  \begin{align*}
    V(f,\omega,\alpha_{MPC})\triangleq\frac{1}{2}(f-f_{\infty})^{T}(f-f_{\infty})
    + \frac{1}{2}\omega^{T} M \omega +
    \frac{1}{2}\alpha_{MPC}^{T}\alpha_{MPC}
  \end{align*}
  After some computations, we obtain
  \begin{align*}
    \dot
    V=&-\omega^{T}(t)E\omega(t)+\sum_{i\in\II{\omega}}\omega_{i}(t)\alpha_{DF,i}(t)
    \\
    &-\sum_{i\in\II{u}}\left(\frac{1}{T_{i}}\alpha_{MPC,i}^{2}(t)-\alpha_{MPC,i}(t)\hat
      u_{MPC,i}(t)\right).
  \end{align*}
  Note that by the definition of $\alpha_{DF}$
  in~\eqref{eqn:second-layer-control},
  $\omega_{i}(t)\alpha_{DF,i}(t)\leqslant0$ holds for every
  $i\in\II{\omega}$ at every $t\geqslant 0$, in that
  $\alpha_{DF,i}(t)=0$ whenever
  $\underline\omega_{i}^{\text{thr}}\leqslant \omega_{i}(t)\leqslant
  \bar\omega_{i}^{\text{thr}}$, and $\alpha_{DF,i}(t)\geqslant 0$
  (reps. $\leqslant 0$) if $\omega_{i}(t)\geqslant
  \bar\omega_{i}^{\text{thr}}>0$ (resp.  $\omega_{i}(t)\leqslant
  \underline\omega_{i}^{\text{thr}}<0$). Therefore, together with
  condition~\eqref{ineq:stability-condition} in
  Lemma~\ref{lemma:Lipschitz-stability}, we have
  \begin{align*}
    \dot V\leqslant-\omega^{T}(t)E\omega(t)-\sum_{i\in\II{u}}
    (\frac{1}{T_{i}}-\epsilon_{i})\alpha_{MPC,i}^{2}(t) \leqslant0.
  \end{align*}
  The convergence follows by LaSalle's invariance
  principle. Specifically, $\omega(t)$ and $\alpha_{MPC}(t)$ converge
  to $\zeros_{n}$ (notice that $\alpha_{MPC,i}\equiv0$ for each
  $i\in\mathcal{I}\backslash\II{u}$). Next we show that
  $\lim_{t\rightarrow\infty}\alpha_{DF,i}(t)=0$ for every
  $i\in\II{\omega}$, which implies that
  $\lim_{t\rightarrow\infty}\alpha_{DF}(t)=\zeros_{n}$ as
  $\alpha_{DF,i}\equiv 0$ for each
  $i\in\mathcal{I}\backslash\II{\omega}$. This simply follows
  from~\eqref{eqn:second-layer-control} since $\alpha_{DF,i}(t)=0$
  whenever
  $\underline\omega_{i}^{\text{thr}}\leqslant\omega_{i}(t)\leqslant
  \bar\omega_{i}^{\text{thr}}$, where $0\in(
  \underline\omega_{i}^{\text{thr}}, \bar\omega_{i}^{\text{thr}})$,
  and we have shown that
  $\lim_{t\rightarrow\infty}\omega(t)=\zeros_{n}$. The convergence of
  $\alpha(t)$ follows by its definition~\eqref{eqn:two-layer}.


  For requirement~(i), we have shown
  in~\cite{YZ-JC:18-cdc1} that it is equivalent to asking that for
  any $i\in\II{\omega}$ at any $t\geqslant 0$,
  \begin{subequations}\label{eqn:ith-dyanmics}
    \begin{align}
      \dot\omega_{i}(t)\leqslant 0\text{ if
      }\omega_{i}(t)=\bar\omega_{i},\label{eqn:ith-dyanmics-a}
      \\
      \dot\omega_{i}(t)\geqslant 0\text{ if
      }\omega_{i}(t)=\underline\omega_{i}.\label{eqn:ith-dyanmics-b}
    \end{align}
  \end{subequations}
  For simplicity, we only prove~\eqref{eqn:ith-dyanmics-a},
  and~\eqref{eqn:ith-dyanmics-b} follows similarly. Note that
  by~\eqref{eqn:compact-form-2},~\eqref{eqn:two-layer},
  and~\eqref{eqn:second-layer-control}, one has
  \begin{align*}
    \dot\omega_{i}(t)&=-E_{i}\omega_{i}(t)-[D]^{T}f(t)+p_{i}(t)+\alpha_{i}(t)
    \\
    &=-E_{i}\omega_{i}(t)-[D]^{T}f(t)+p_{i}(t)+\alpha_{MPC,i}(t)+\alpha_{DF,i}(t)
    \\
    &=-v_{i}(t)+\alpha_{DF,i}(t).
  \end{align*}
  Now if $\omega_{i}(t)=\bar\omega_{i}$, then
  $-v_{i}(t)+\alpha_{DF,i}(t)=-v_{i}(t)+\min\{0,v_{i}(t)\}\leqslant0$;
  hence condition~\eqref{eqn:ith-dyanmics-a} holds.

  Finally, requirement~(ii) follows immediately
  from~(i) and~(iii). As for any
  $i\in\mathcal{I}$, $\omega_{i}$ converges to
  $0\in(\underline\omega_{i},\bar\omega_{i})$, there must exist a
  finite time $t_{0}$ such that
  $\omega_{i}(t_{0})\in[\underline\omega_{i},\bar\omega_{i}]$, which,
  by frequency invariance, implies that
  $\omega_{i}(t)\in[\underline\omega_{i},\bar\omega_{i}]$ at any
  $t\geqslant t_{0}$.
\end{proof}

\begin{remark}\longthmtitle{Control framework without bottom
    layer}\label{rmk:no-bottom-layer}
  {We have shown in~\cite{YZ-JC:18-cdc1} that even if one only
    implements the first-layer controller (i.e.,
    $\alpha_{MPC}\equiv\zeros_{n}$, leading to $\alpha=\alpha_{DF}$),
    the closed-loop system still meets all requirements except for the
    economic cooperation. Such a lack of cooperation can be seen from
    two aspects. First, since $\alpha_{DF}$ is only available for
    nodes in $\II{\omega}$, those in $\II{u}\backslash\II{\omega}$
    do not get involved in controlling frequency transients. Second,
    the first-layer control is a non-optimization-based state
    feedback, where each $\alpha_{DF,i}$ with $i\in\II{\omega}$ is
    merely in charge of controlling transient frequency for its own
    node~$i$.}  \oprocend
\end{remark}
  
Since the centralized doubled-layered control scheme already meets
requirements~(i)-(v), in the next section, we deal with the remaining
distributed computation requirement.

\section{Controller decentralization through network
  division}\label{section:distributed-control}

Going over the double-layered design in the previous section, it is
worth noticing that the only component of the controller that requires
global information is the MPC component, all the others being local in
nature.  In this section, we propose a distributed double-layered
controller design that addresses this point.  The general idea is to
split the computation of the MPC component across different regions,
and have each region determine its own MPC component based on its
regional state and regional forecasted power information. 

We split the network into regions so that each controlled node is
contained in exactly one region. Formally,
let 
$\{\mathcal{G}_{\beta}=(\mathcal{I}_{\beta},\mathcal{E}_{\beta})
\}_{\beta\in[1,d]_{\naturals}}$ be induced subgraphs of $\mathcal{G}$
such that
\begin{subequations}\label{sube:assu-subgraph}
  \begin{align}
    &\II{u}\subseteq\bigcup_{\beta=1}^{d}\mathcal{I}_{\beta},
    \label{sube:assu-subgraph-1}
    \\
    &\mathcal{I}_{\eta}\bigcap\mathcal{I}_{\beta} \bigcap \II{u}=
    \emptyset,\ \forall \eta,\beta\in [1,d]_{\naturals} \text{ with }
    \eta\neq\beta.\label{sube:assu-subgraph-2}
  \end{align}
\end{subequations}
For each subgraph $\GG_{\beta}$, let
$\II{u}_{\beta}\triangleq\II{u}\bigcap\mathcal{I}_{\beta}$
(resp. $\II{\omega}_{\beta}\triangleq\II{\omega}\bigcap\mathcal{I}_{\beta}$)
denote the collection of controlled node indexes (resp. nodes indexes
with transient frequency requirements) within $\GG_{\beta}$.  Let
$(f_{\beta}, \omega_{\beta},\alpha_{MPC,\beta}) \in
\real^{2|\mathcal{I}_{\beta}|+|\mathcal{E}_{\beta}|}$ be the
collection of states in $\GG_{\beta}$. Let
$p^{fcst}_{t,\beta}:[t,t+\tilde
t]\rightarrow\real^{|\mathcal{I}_{\beta}|}$ be the forecasted power
injection for every node in $\GG_{\beta}$ starting from time $t$ to
$\tilde t$ seconds later. Note that the dynamics of $\GG_{\beta}$ is
not completely determined by
$(f_{\beta},\omega_{\beta},\alpha_{MPC,\beta})$ due to its
interconnection with other parts of the network outside $\GG_{\beta}$
through transmission lines with $i\in\mathcal{I}_{\beta}$ and $j \in
\mathcal{I}\backslash\mathcal{I}_{\beta}$ (equivalently, with $(i,j)
\in \mathcal{E}_{\beta}'$).  Instead of considering the flows
$f_{ij}$'s of these transmission lines as states for $\GG_{\beta}$, we
model them as exogenous power injections. Formally, denote for every
$i\in\mathcal{I}_{\beta}$,
\begin{align}\label{eqn:pfcst-f}
  p_{t,\beta,i}^{fcst,f}(\tau)\triangleq\sum_{\substack{j:j\rightarrow
      i\\(j,i)\in\mathcal{E}'_{\beta}}}f_{ji}(t)-\sum_{\substack{j:i\rightarrow
      j\\(i,j)\in\mathcal{E}'_{\beta}}}f_{ij}(t),\;\forall \tau\in
  [t,t+\tilde t]
\end{align}
%
as the forecasted exogenous power injection acting on node~$i$ caused
by transmission lines in $\mathcal{E}_{\beta}'$, where $\left\{ j :
  j\rightarrow i \right\}$ is the shorthand notation for $\left\{ j :
  j\in\mathcal{N}(i) \text{ and $j$ is the positive end of $(i,j)$}
\right\}$.  For simplicity, here we take the forecasted value starting
from time $t$ to be constant within the time interval $[t,t+\tilde
t]$.
%
%
Denote by $p^{fcst}_{t,\beta}:[t,t+\tilde
t]\rightarrow\real^{|\mathcal{I}_{\beta}|}$ the collection of all
$p^{fcst}_{t,i}$'s with $i\in\mathcal{I}_{\beta}$, and let $\bar
p_{t,\beta}^{fcst}\triangleq p^{fcst}_{t,\beta}+p^{fcst,f}_{t,\beta}$
be the overall forecasted power injection.  We illustrate these
definitions in an example.

\begin{example}\longthmtitle{A network division in IEEE 39-bus
    network}\label{ex:IEEE-39-division} 
  Fig.~\ref{fig:IEEE39bus} shows a network division example with
  $\II{\omega}=\{30,31,32,37\}$ and $\II{u}=\{3,7,25,30,31,32,37\}$.
  The set $\II{u}$ consists of all nodes in $\II{\omega}$ and all nodes
  capable of adjusting their loads and within two hops of a node in
  $\II{\omega}$. We split the network into three regions ($d=3$)
  satisfying~\eqref{sube:assu-subgraph}. Each region
  $\mathcal{G}_{\beta}$ contains the two-hop neighborhood for every
  node in $\II{\omega}_{\beta}$. We denote by $\mathcal{G}_{1}$ the
  upper left region in Fig.~\ref{fig:IEEE39bus} and use it to
  illustrate related definitions. In $\mathcal{G}_{1}$, one has
  $\mathcal{I}_{1}=\{1,2,3,25,26,30,37\}$,
  $\II{\omega}_{1}=\{30,37\}$, $\II{u}_{1}=\{3,25,30,37\}$,
  $\mathcal{E}_{1}=\{(1,2),(2,30),(2,25),(3,25),(25,37),(26,37)\}$,
  and
  $\mathcal{E}_{1}'=\{(1,39),(3,4),(3,18),(26,27),(26,28),(26,29)\}$. For
  every $i\in\mathcal{I}_{1}$, one can compute
  $p_{t,\beta,i}^{fcst,f}$ by~\eqref{eqn:pfcst-f}, and it is easy to
  see that $p_{t,\beta,i}^{fcst,f}\equiv0$ for $i\in\{2,3,25,30,37\}$,
  as these nodes are not ends of any edge in $\mathcal{E}_{1}'$.
  \oprocend
\end{example}

\begin{figure}[htb]
  \vspace{-0.5cm}
  \centering%
  \includegraphics[width=1\linewidth]{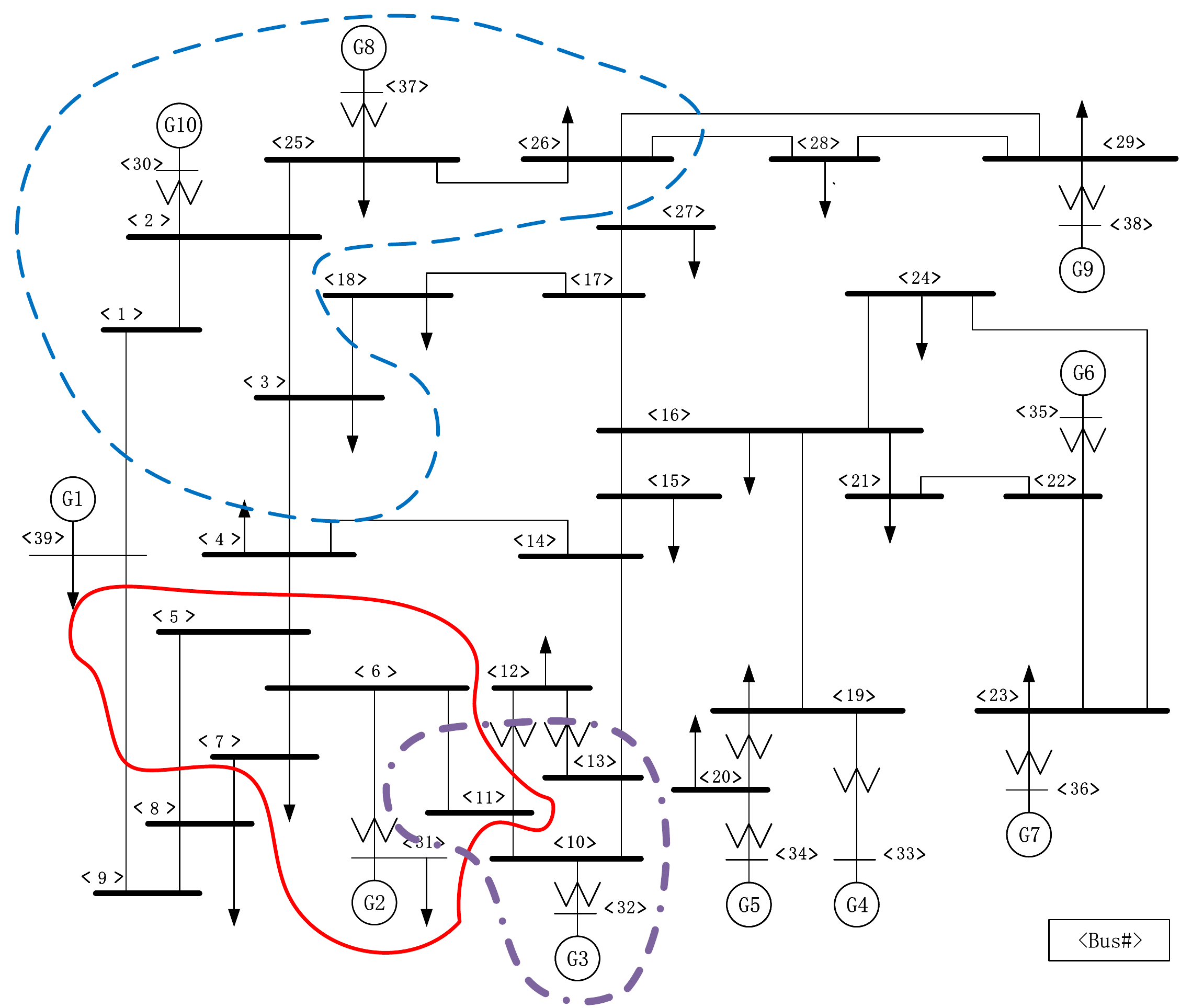}
  \vspace{-0.5cm}
  \caption{IEEE 39-bus power network.}\label{fig:IEEE39bus}
  \vspace*{-1.5ex}
\end{figure}

The key idea of designing the distributed MPC component is to consider
each region as an single network and separately implement the
centralized MPC on it. Formally, for every
$\beta\in[1,d]_{\naturals}$, let
$\{\Delta^{j}_{\beta}\}_{j\in\naturals}$ be its sampling sequence. For
every $i\in\II{u}$, select the unique $\beta$ such that
$i\in\mathcal{I}_{\beta}$, and at every
$t\in[\Delta^{j}_{\beta},\Delta^{j+1}_{\beta})$ with $j\in\naturals$,
let
\begin{align}\label{eqn:distributed-uMPC}
  u_{MPC,i}(t)& \notag
  \\
  &\hspace{-1cm}=\hat
  u^{*}_{i}(\mathcal{G}_{\beta},\II{u}_{\beta},\II{\omega}_{\beta},
  \bar p^{fcst}_{\Delta^{j},\beta},
  f_{\beta}(\Delta^{j}),\omega_{\beta}(\Delta^{j}),\alpha_{\beta}(\Delta^{j})).
\end{align}
Compared to the centralized MPC in~\eqref{eqn:uMPC}, the distributed
version~\eqref{eqn:distributed-uMPC} transforms all global
information, including network topology, forecasted power injection
and system state, into local information. Their structural difference
is that, in the distributed MPC, the overall forecasted power
injection $\bar p^{fcst}_{\Delta^{j},\beta}$ includes an additional
term~\eqref{eqn:pfcst-f} to account for the interconnected dynamics
between the region of interest and the rest of the network.  Next, we
characterize the closed-loop stability and performance of the system
under the distributed controller.

\begin{proposition}\longthmtitle{Distributed double-layered control
    with stability and frequency
    guarantee}\label{prop:two-layer-control-dis} 
  Under Assumption~\ref{assumption:finite-convergence} and assume that
  $\epsilon_{i}T_{i}<1$ for every $i\in\II{u}$, system~\eqref{eqn:compact-form} with controller defined by~\eqref{eqn:two-layer},~\eqref{eqn:stability-filter},~\eqref{eqn:lp-filter},
  \eqref{eqn:second-layer-control}, and~\eqref{eqn:distributed-uMPC} meets requirements~(i)-(iii). Furthermore, $\alpha(t)$, $\alpha_{MPC}(t)$, and $\alpha_{DF}(t)$ converge to $\zeros_{n}$ as $t\rightarrow \infty$.
\end{proposition}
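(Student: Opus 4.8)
The plan is to show that the distributed controller inherits every structural property on which the proof of Theorem~\ref{thm:two-layer-control} actually rested, so that the Lyapunov and invariance arguments carry over essentially verbatim. The starting observation is that the distributed design modifies the centralized one \emph{only} in the MPC component: \eqref{eqn:distributed-uMPC} replaces \eqref{eqn:uMPC}, whereas the stability filter \eqref{eqn:stability-filter}, the low-pass filter \eqref{eqn:lp-filter}, and the top layer control \eqref{eqn:second-layer-control} are unchanged. I would therefore first isolate the two facts that the centralized proof truly used, namely condition~\eqref{ineq:stability-condition} and the sign structure of $\alpha_{DF}$, and verify that neither is disturbed by this single change.

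For requirement~(iii) I would invoke the key property highlighted in Remark~\ref{rmk:independent-design}: the saturation in \eqref{eqn:stability-filter} forces $|\hat u_{MPC,i}(t)|\leqslant\epsilon_{i}|\alpha_{MPC,i}(t)|$ for \emph{any} real input $u_{MPC,i}(t)$, so $\alpha_{MPC,i}(t)\hat u_{MPC,i}(t)\leqslant\epsilon_{i}\alpha_{MPC,i}^{2}(t)$ holds regardless of how $u_{MPC}$ is generated. In particular it still holds when $u_{MPC}$ comes from the regional programs~\eqref{eqn:distributed-uMPC}, so condition~\eqref{ineq:stability-condition} remains valid. With this in hand, requirement~(iii) follows by repeating the energy-function computation verbatim: the same $V$ yields the same expression for $\dot V$, and the bound $\dot V\leqslant-\omega^{T}E\omega-\sum_{i\in\II{u}}(\tfrac{1}{T_{i}}-\epsilon_{i})\alpha_{MPC,i}^{2}\leqslant 0$ uses only \eqref{ineq:stability-condition} and $\omega_{i}(t)\alpha_{DF,i}(t)\leqslant 0$, neither of which depends on the MPC structure. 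LaSalle's invariance principle then drives $\omega$ and $\alpha_{MPC}$ to $\zeros_{n}$, from which $\alpha_{DF}\to\zeros_{n}$ (since $\alpha_{DF,i}=0$ near $\omega_{i}=0$) and $\alpha\to\zeros_{n}$.

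For requirement~(i) I would point out that the invariance argument is entirely MPC-agnostic: at any $i\in\II{\omega}$ the closed-loop dynamics give $\dot\omega_{i}=-v_{i}+\alpha_{DF,i}$, where $v_{i}$ absorbs the current value $\alpha_{MPC,i}(t)$ through~\eqref{eqn:second-layer-control}. Since this identity and the saturation structure of $\alpha_{DF,i}$ use only the real-time quantities $\omega_{i}, f, p_{i}, \alpha_{MPC,i}$, which are well defined no matter how $\alpha_{MPC}$ is computed, the inward-pointing inequalities~\eqref{eqn:ith-dyanmics} hold exactly as in the centralized case, and requirement~(ii) then follows from~(i) and~(iii) by the same eventual-entry argument. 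The one genuinely new bookkeeping item is that the closed-loop solution must still be well defined; here I would apply Lemma~\ref{lemma:pwl-optimal} to each regional program $\RR(\mathcal{G}_{\beta},\dots)$ to conclude that each $u_{MPC,i}$ is Lipschitz in the regional sampled state, so $\hat u_{MPC}$ is Lipschitz at sampling times and piece-wise continuous in time, making $\alpha_{MPC}$ continuous and the solution existent and unique.

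The main (and essentially only) conceptual obstacle is to confirm that the centralized proof never secretly exploited the global, optimal, or accurate nature of the MPC signal, but only its boundedness relative to $\alpha_{MPC}$ via the stability filter together with the fact that $\alpha_{MPC,i}$ enters the $i$th dynamics linearly. Since the regional forecast term~\eqref{eqn:pfcst-f} affects only the internal optimization that produces $u_{MPC}$ and never the true closed-loop dynamics~\eqref{eqn:compact-form}, this robustness does hold, and the distributed controller satisfies requirements~(i)--(iii) with the claimed convergence of $\alpha$, $\alpha_{MPC}$, and $\alpha_{DF}$ to $\zeros_{n}$.
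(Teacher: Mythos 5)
Your proposal is correct and takes essentially the same route as the paper: the paper simply states that the proof is identical to that of Theorem~\ref{thm:two-layer-control}, and your argument correctly identifies why this is so, namely that only the MPC component changes while the Lyapunov, invariance, and attractivity arguments rely solely on condition~\eqref{ineq:stability-condition} and the sign structure of $\alpha_{DF}$, both of which are unaffected by replacing~\eqref{eqn:uMPC} with~\eqref{eqn:distributed-uMPC}.
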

%
%

The proof is exactly the same as that of
Theorem~\ref{thm:two-layer-control}, and is  omitted.


\section{Simulations}\label{sec:simulations}

Here, we illustrate the performance of the distributed controller in
the IEEE 39-bus power network described in
Fig.~\ref{fig:IEEE39bus}. All parameters in the network
model~\eqref{eqn:compact-form} are taken from the Power System
Toolbox~\cite{KWC-JC-GR:09}. We assign a small rotational inertia
$M_{i}=0.1$ to all non-generator nodes for simplicity. Let
$\bar\omega_{i}=-\underline\omega_{i}=0.2Hz$, so that the safe
frequency region is $[59.8Hz,\ 60.2Hz]$.  To set up the distributed
MPC component~\eqref{eqn:distributed-uMPC}, we select $\tilde t=2s$
and $T=0.02$, so that the predicted step $N=100$; $\epsilon_{i}=1.9$
and $T_{i}=0.5$ for every $i\in\II{u}$; $c_{i}=1$ if
$i\in\II{\omega}$, while $c_{i}=4$ if
$i\in\II{u}\backslash\II{\omega}$; $d=100$;
$\{\Delta^{j}_{\beta}\}_{j\in\naturals}=\{j\}_{j\in\naturals}$ for
every $\beta\in[1,d]_{\naturals}$, i.e., in each region, the MPC
component samples and updates its output every $1s$;
$p^{fcst}_{t}(\tau)=p(\tau)$ for every $\tau\in[t,t+\tilde t]$, i.e.,
the forecasted power injection is precise. To set up the top layer
controller~\eqref{eqn:second-layer-control}, let
$\bar\gamma_{i}=\underline\gamma_{i}=1$ and
$\bar\omega_{i}^{\text{thr}}=-\underline\omega_{i}^{\text{thr}}=0.1Hz$
for every $i\in\II{\omega}$.


We first show that the distributed controller defined
by~\eqref{eqn:two-layer},~\eqref{eqn:stability-filter},~\eqref{eqn:lp-filter},
\eqref{eqn:second-layer-control}, and~\eqref{eqn:distributed-uMPC} is
able to maintain the targeted nodal frequencies within the safe region
without changing the open loop equilibrium. We disturb all non-generator nodes by
some time-varying power injections. In detail, for every
$i\in[1,29]_{\naturals}$, let $p_{i}(t)=(1+\delta(t))p_{i}(0)$, where
\begin{align*}
  \delta(t)=
  \begin{cases}
    0.2\sin(\pi t\slash50) & \hspace{0.5cm}\text{if $0\leqslant t\leqslant 25$}
    \\
    0.2& \hspace{0.5cm}\text{if $ 25< t\leqslant 125$}
    \\
    0.2\sin(\pi (t-100)\slash50)& \hspace{0.5cm}\text{if $ 125< t\leqslant 150$}
\\
    0& \hspace{0.5cm}\text{if $ 150< t$}
  \end{cases}
\end{align*}
The deviation term $\delta(t)p_{i}(0)$ first drops down at a relatively
fast rate and then remains steady for a long time period, finally
converges to 0. We have chosen this scenario to test the capability of
the controller against both fast and slow time-varying power injection
disturbances.

\begin{figure*}[tbh!]
  \centering
  \subfigure[\label{fig:frequency-response-open-loop}]{\includegraphics[width=.24\linewidth]{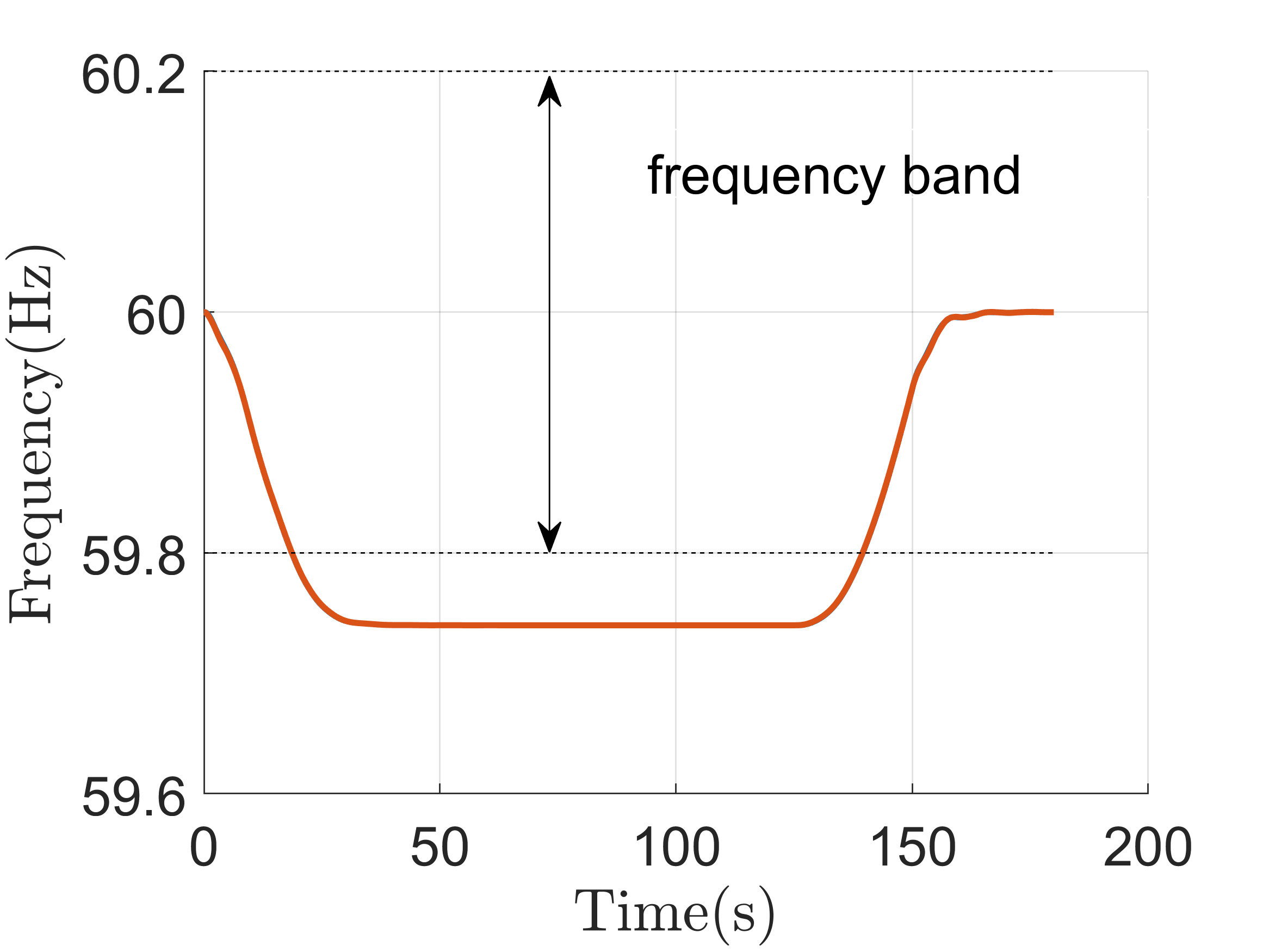}}
  \subfigure[\label{fig:frequency-response-closed-loop}]{\includegraphics[width=.24\linewidth]{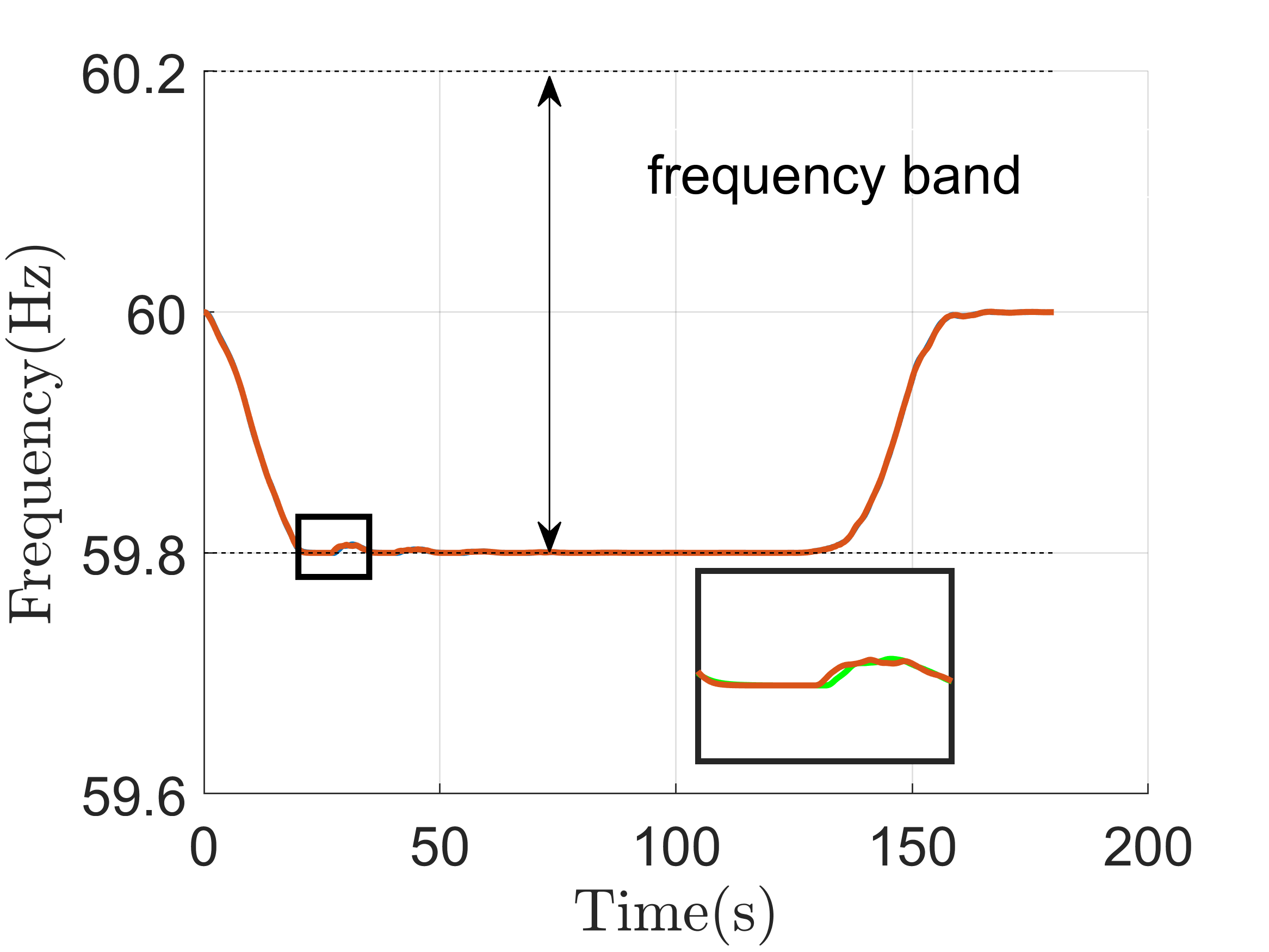}}
  \subfigure[\label{fig:control-response}]{\includegraphics[width=.24\linewidth]{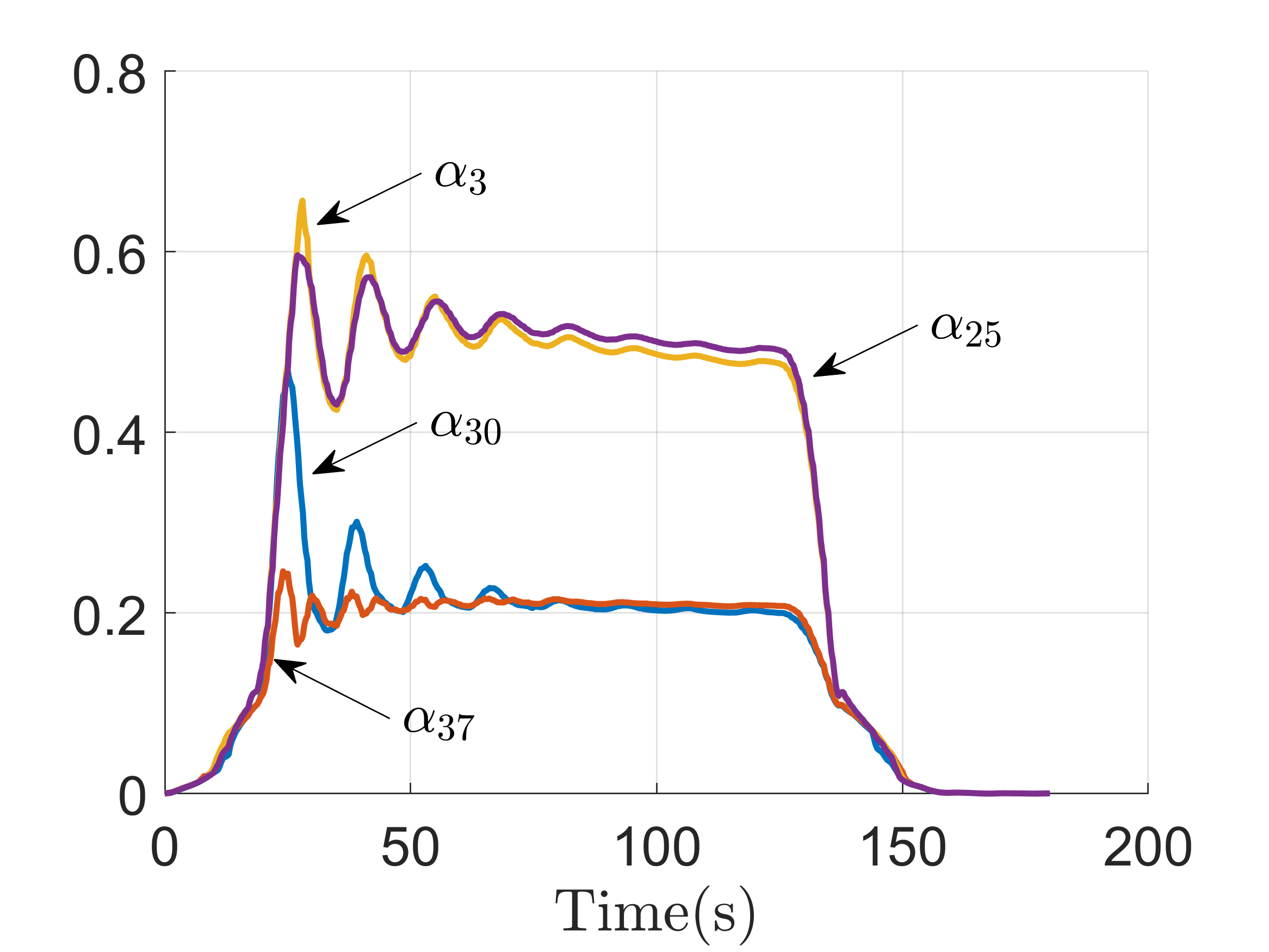}}
    \subfigure[\label{fig:control-response-no-bottom-layer}]{\includegraphics[width=.24\linewidth]{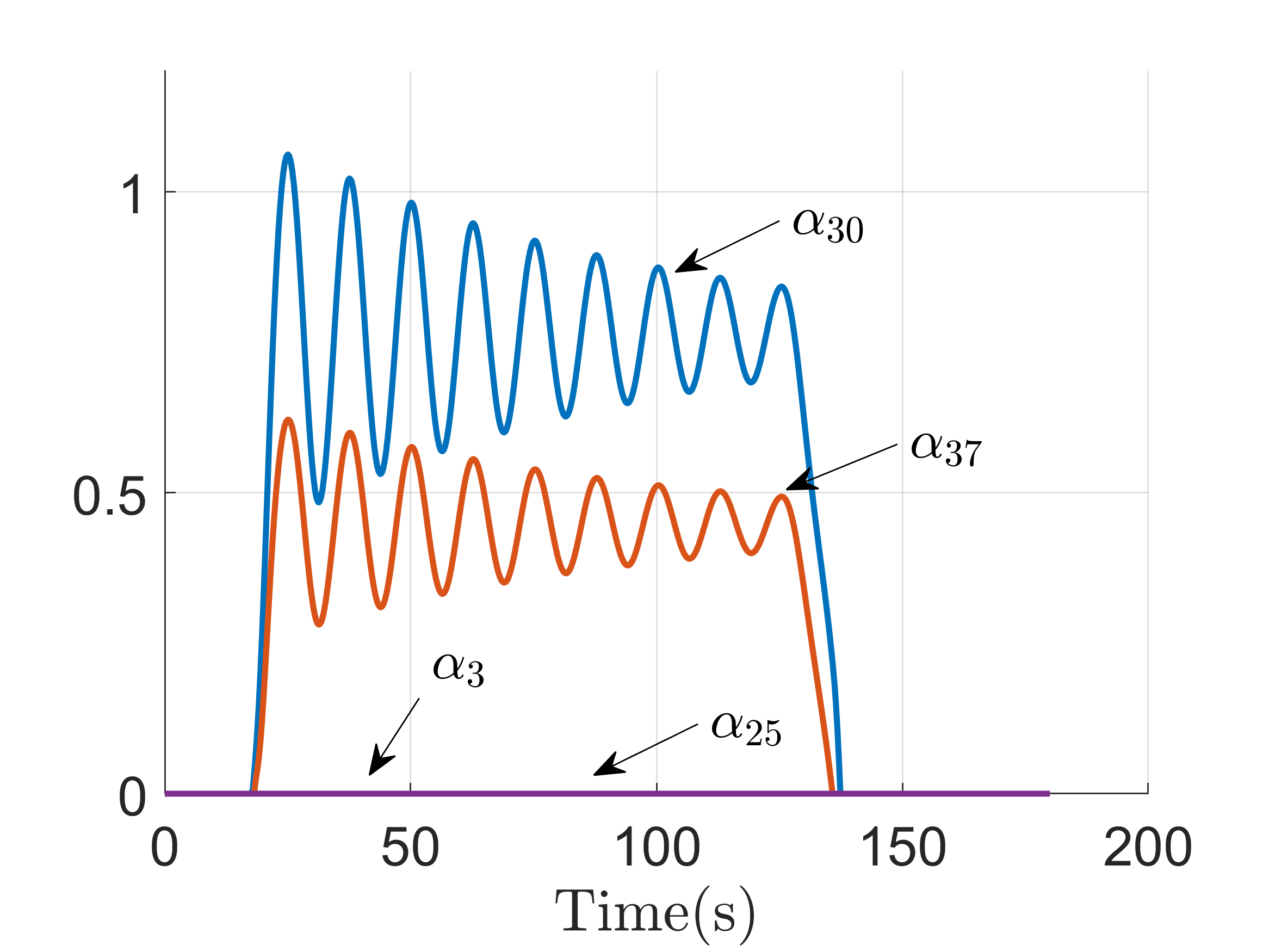}}
    \caption{Frequency and control input trajectories with and without
      distributed transient frequency
      control. Plot~\subref{fig:frequency-response-open-loop} shows
      the open-loop frequency responses at node 30 and 37, both
      exceeding the lower safe bound. With the distributed control, in
      plot~\subref{fig:frequency-response-closed-loop}, both stay
      inside the safe region. Plot~\subref{fig:control-response} shows
      the corresponding control trajectories, where { overall control
        cost 
        $\sum_{i=3,25,30,37}\int_{0}^{200}c_{i}\alpha^{2}_{i}(\tau)\text{d}\tau=97.8$.}
      {Plot~\subref{fig:control-response-no-bottom-layer} shows the
        control trajectories with bottom layer disabled, where the
        overall control cost is 363.5.}}\label{fig:trajectories}
\vspace*{-1.5ex}
\end{figure*}


For simplicity, in the following we focus on the state and control
input trajectories in the left-top region in Fig.~\ref{fig:IEEE39bus}.
Fig.~\ref{fig:trajectories}\subref{fig:frequency-response-open-loop}
shows the open-loop frequency responses of node 30 and 37, which have
transient frequency requirements.  The two nodes have almost the same
overlapping trajectories that both exceed the safe lower frequency
bound $59.8Hz$. As a comparison, in
Fig.~\ref{fig:trajectories}~\subref{fig:frequency-response-closed-loop},
with the distributed controller, their frequency responses stay within
the safe region, and also gradually come back to $60Hz$ after the
disturbance disappears.  Given the selected coefficients
$c_{3}=c_{25}=1$ and $c_{30}=c_{37}=4$ in the optimization
problem~\eqref{opti:nonlinear}, the controller tends to use
$\alpha_{3}$ and $\alpha_{25}$ more than $\alpha_{30}$ and
$\alpha_{37}$, and this is reflected in the control trajectories in
Fig.~\ref{fig:trajectories}\subref{fig:control-response}. {Fig.~\ref{fig:trajectories}\subref{fig:control-response-no-bottom-layer}
  shows the control trajectories for the non-optimization-based
  controller proposed in~\cite{YZ-JC:18-cdc1}. Compared with those in
  Fig.~\ref{fig:trajectories}\subref{fig:control-response},
  $\alpha_{30}$ and $\alpha_{37}$ do not have a similar trend, and the
  control actions at node $3$ and $25$ have to be disabled,
  cf. Remark~\ref{rmk:no-bottom-layer}.  }

 We then come back to our distributed controller and
further focus on the control signal at node 30 by decomposing
$\alpha_{30}$ into its bottom layer output $\alpha_{MPC,30}$ and top
layer output $\alpha_{DF,30}$. As shown in
Fig.~\ref{fig:trajectories-30}\subref{fig:control-response-30},
$\alpha_{MPC,30}$ is responsible for the larger share in the overall
control signal $\alpha_{30}$, whereas $\alpha_{DF,30}$ only slightly
tunes $\alpha_{30}$. However, if we reduce the penalty $d_{30}$ from
100 to 10, in
Fig.~\ref{fig:trajectories-30}\subref{fig:control-response-30-smaller-penalty},
the dominance of $\alpha_{MPC,30}$ decreases, in accordance with our
discussion in Remark~\ref{rmk:violation-penalty}.
\begin{figure}[tbh!]
  \centering
  \subfigure[\label{fig:control-response-30}]{\includegraphics[width=.48\linewidth]{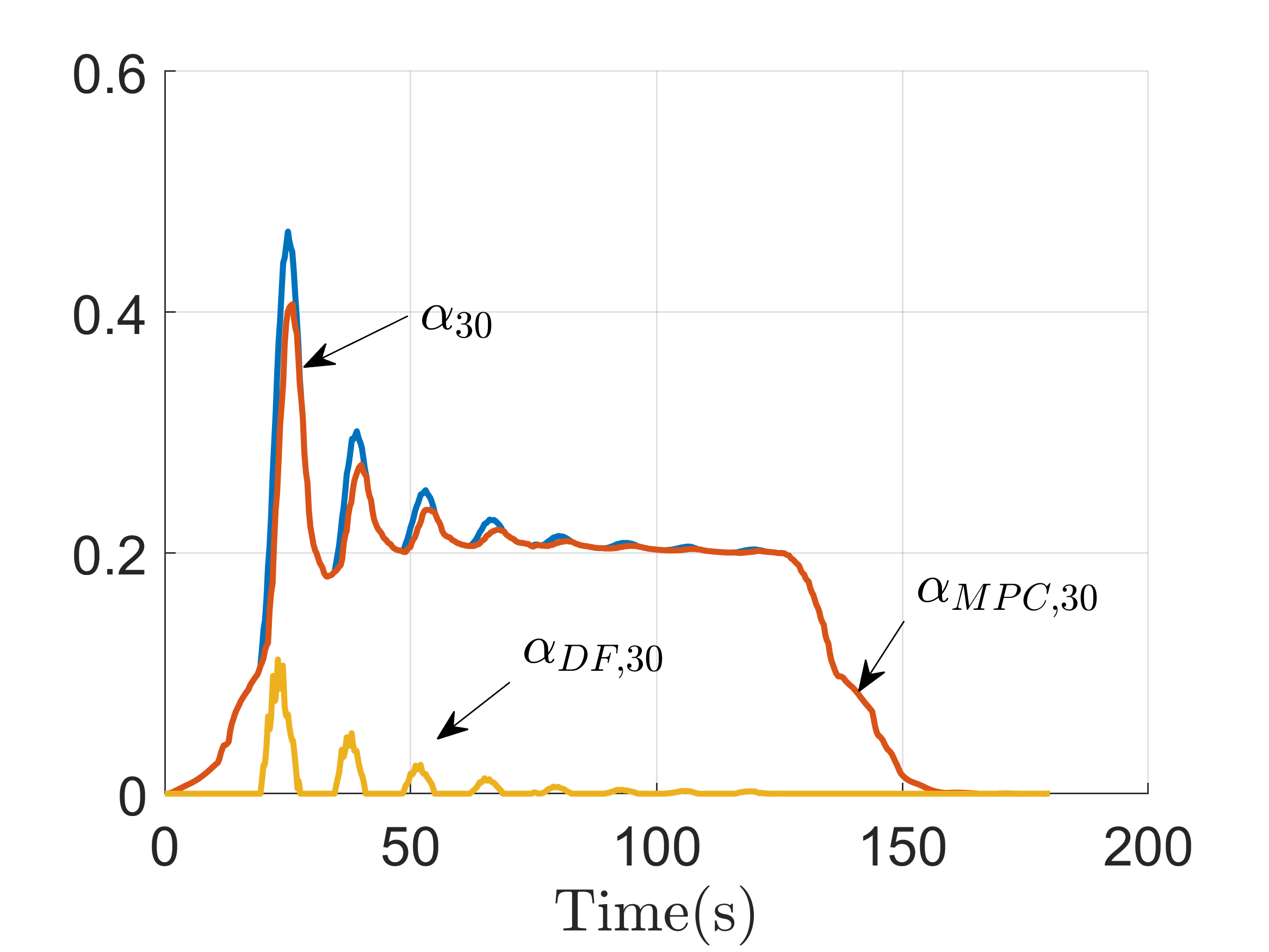}}
  \subfigure[\label{fig:control-response-30-smaller-penalty}]{\includegraphics[width=.48\linewidth]{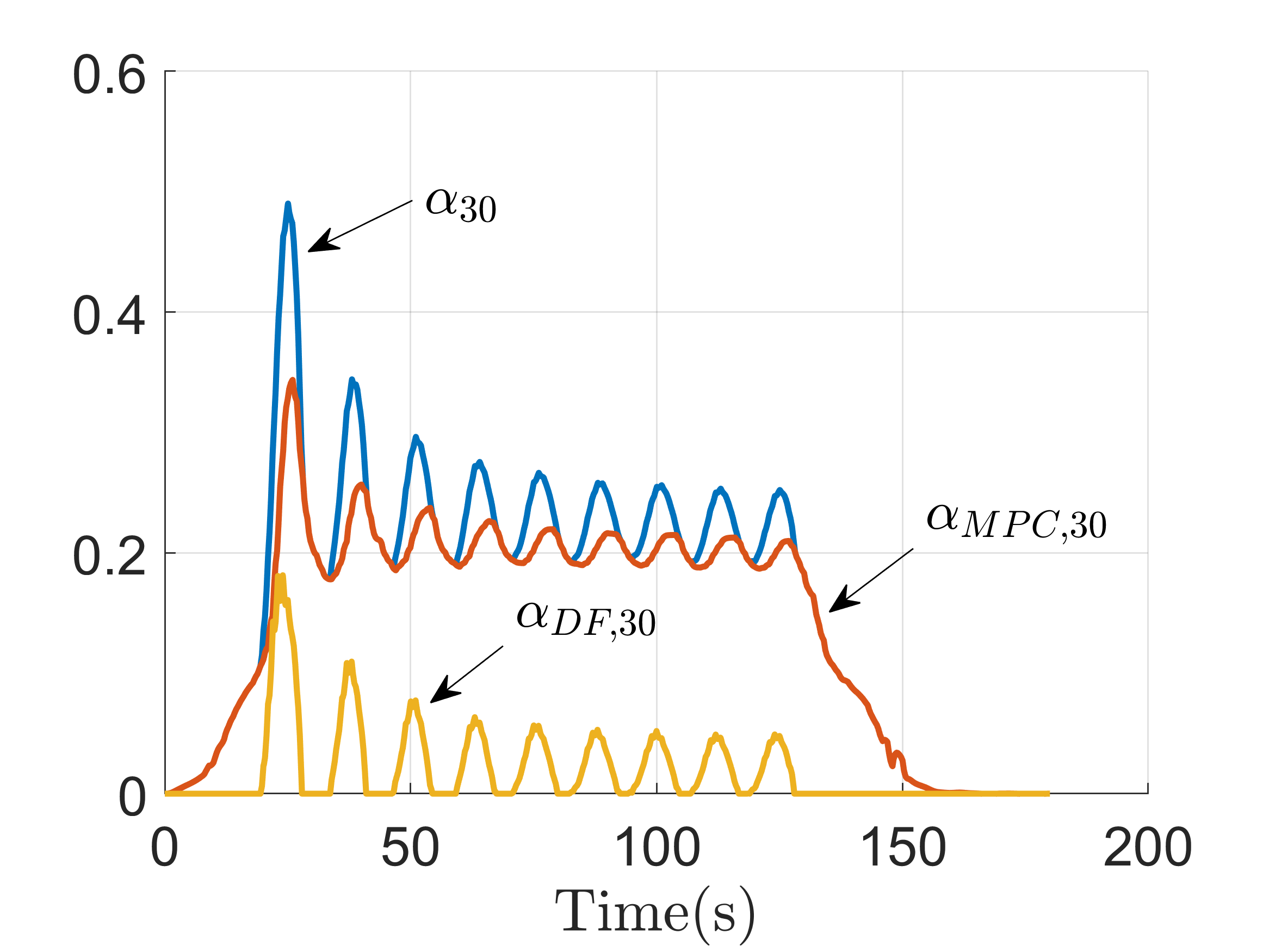}}
  \caption{Control signal decomposition at node 30 with different
    $d_{30}$. In plot~\subref{fig:control-response-30}, with
    $d_{30}=100$, the bottom layer action dominates the total
    input. Such a dominance diminishes as we reduce $d_{30}$ to 10,
    cf.
    plot~\subref{fig:control-response-30-smaller-penalty}.}\label{fig:trajectories-30}
\vspace*{-1.5ex}
\end{figure}
        
Lastly, to verify that the proposed controller meets frequency attractivity requirement,
we consider a case where the initial frequency is outside the safe
region and see how the controller force the frequency back to the
region.  To do so, we disable in the setup above the distributed
controller for the first 30s. In
Fig.~\ref{fig:trajectories-bad-initial}\subref{fig:frequency-response-bad-initial},
one can see that the frequency of node 30 quickly recovers once we
switch on the distributed
controller. Fig.~\ref{fig:trajectories-bad-initial}\subref{fig:control-response-bad-initial}
shows the control signal of node 30. Note that, after transients,
$\alpha_{MPC,30}$ still dominates the overall control signal.

\begin{figure}[tbh!]
  \centering
  \subfigure[\label{fig:frequency-response-bad-initial}]{\includegraphics[width=.48\linewidth]{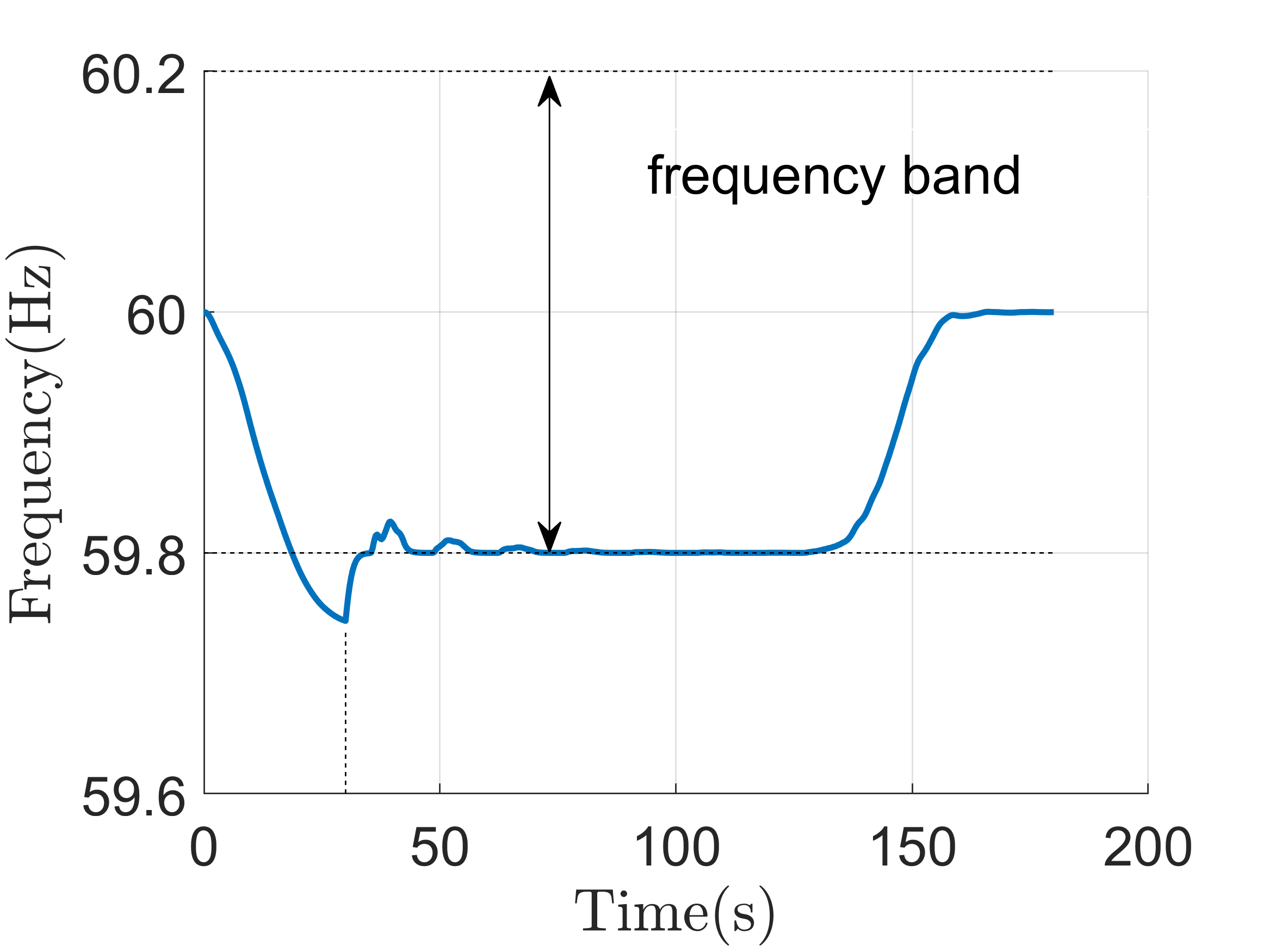}}
  \subfigure[\label{fig:control-response-bad-initial}]{\includegraphics[width=.48\linewidth]{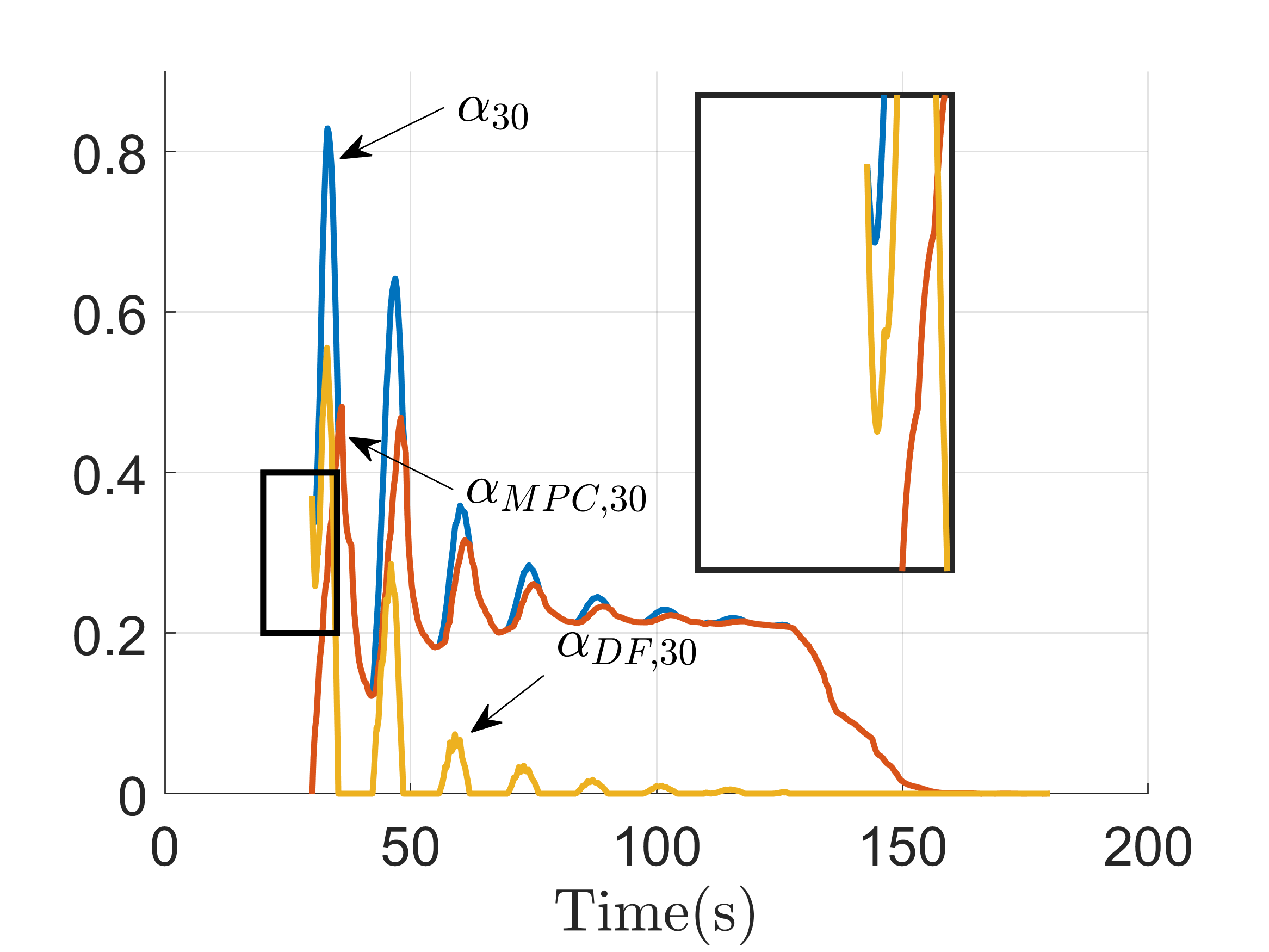}}
  \caption{Frequency and control input trajectories at node 30 with
    controller available after $t=30s$. In
    plot~\subref{fig:frequency-response-bad-initial}, the frequency
    gradually comes back to the safe region after the controller kicks
    in. Plot~\subref{fig:control-response-bad-initial} shows the
    control signals.}\label{fig:trajectories-bad-initial}
\vspace*{-1.5ex}
\end{figure}


\section{Conclusions}
We have proposed a distributed transient frequency control framework
for power networks that preserves the asymptotic stability of the
network and at the same time, guarantees safe frequency interval
invariance and attractivity for targeted nodes. The controller
possesses a double-layered structure, with the bottom layer
periodically sampling the state and allocating control signals over a
local region in a receding horizon fashion.  The top layer slightly
tunes the bottom layer signal in order to provably enforce frequency
invariance and attractivity guarantees. Implemented over a network
partition, both layers rely on local state and power injection
information. Future work will investigate the extension of the results
to nonlinear swing dynamics, the optimal selection of sampling
sequences in the bottom layer control design, the analysis of the
performance trade-offs of the parameter selections, and the designs of
distributed control schemes that do not rely on network partitions.

%

\bibliographystyle{IEEEtran}%
\bibliography{alias,JC,Main,Main-add}
        
\end{document}